%%%%%%%% ICML 2025 EXAMPLE LATEX SUBMISSION FILE %%%%%%%%%%%%%%%%%

\documentclass{article}

% Recommended, but optional, packages for figures and better typesetting:
\usepackage{microtype}
\usepackage{graphicx}
\usepackage{subfigure}
\usepackage{booktabs} % for professional tables
\usepackage{array}

% hyperref makes hyperlinks in the resulting PDF.
% If your build breaks (sometimes temporarily if a hyperlink spans a page)
% please comment out the following usepackage line and replace
% \usepackage{icml2025} with \usepackage[nohyperref]{icml2025} above.
\usepackage{hyperref}
\usepackage{multirow}
\usepackage[table]{xcolor}
\usepackage{adjustbox}

% Attempt to make hyperref and algorithmic work together better:

\usepackage{algorithm}
\usepackage{algpseudocode}

\setlength{\textfloatsep}{9pt}
\setlength{\floatsep}{9pt}
\setlength{\intextsep}{9pt}
% \setlength{\textfloatsep}{-2pt minus 2.0pt}
% \setlength{\intextsep}{5pt minus 2.0pt}
% \setlength{\floatsep}{5pt minus 2.0pt}
% \usepackage{algorithmic}
% \usepackage{algpseudocode}
% Use the following line for the initial blind version submitted for review:
% \usepackage{icml2025}

% If accepted, instead use the following line for the camera-ready submission:
\usepackage[accepted]{icml2025}

% For theorems and such
\usepackage{amsmath}
\usepackage{amssymb}
\usepackage{mathtools}
\usepackage{amsthm}

% if you use cleveref..
\usepackage[capitalize,noabbrev]{cleveref}
\newcommand{\method}{ArrayDPS}
% \definecolor{top1}{rgb}{0.0, 0.65, 0.0}    % Lighter dark green
% \definecolor{top2}{rgb}{0.2, 0.8, 0.2}  % Medium green
% \definecolor{top3}{rgb}{0.6, 0.9, 0.6}   % Light green
% \definecolor{top1}{rgb}{0.9, 0.8, 0.0}    % Dark Yellow
% \definecolor{top2}{rgb}{0.95, 0.9, 0.25}  % Medium Yellow
% \definecolor{top3}{rgb}{1.0, 0.95, 0.6}   % Light Yellow
% \definecolor{top1}{rgb}{0.75, 0.3, 0.3}    % Lighter Dark Red
% \definecolor{top2}{rgb}{0.9, 0.4, 0.4}   % Lighter Medium Red
% \definecolor{top3}{rgb}{0.98, 0.8, 0.8}      % Very Light Red
\definecolor{top1}{rgb}{0.8, 0.5, 0.5}    % Very Light Dark Red
\definecolor{top2}{rgb}{0.95, 0.7, 0.7}   % Lighter Medium Red
\definecolor{top3}{rgb}{0.98, 0.9, 0.9}  % Extremely Light Red
%%%%%%%%%%%%%%%%%%%%%%%%%%%%%%%%
% THEOREMS
%%%%%%%%%%%%%%%%%%%%%%%%%%%%%%%%
\theoremstyle{plain}
\newtheorem{theorem}{Theorem}[section]

\theoremstyle{definition}

\theoremstyle{remark}

% Todonotes is useful during development; simply uncomment the next line
%    and comment out the line below the next line to turn off comments
%\usepackage[disable,textsize=tiny]{todonotes}
\usepackage[textsize=tiny]{todonotes}

% Added by Romit

\usepackage{soul}

% \usepackage[subtle]{savetrees}
% The \icmltitle you define below is probably too long as a header.
% Therefore, a short form for the running title is supplied here:
\icmltitlerunning{ArrayDPS: Unsupervised Blind Speech Separation with a Diffusion Prior}

\begin{document}

\twocolumn[
\icmltitle{ArrayDPS: Unsupervised Blind Speech Separation with a Diffusion Prior}

% It is OKAY to include author information, even for blind
% submissions: the style file will automatically remove it for you
% unless you've provided the [accepted] option to the icml2025
% package.

% List of affiliations: The first argument should be a (short)
% identifier you will use later to specify author affiliations
% Academic affiliations should list Department, University, City, Region, Country
% Industry affiliations should list Company, City, Region, Country

% You can specify symbols, otherwise they are numbered in order.
% Ideally, you should not use this facility. Affiliations will be numbered
% in order of appearance and this is the preferred way.
% \icmlsetsymbol{equal}{*} 

\begin{icmlauthorlist}
\icmlauthor{Zhongweiyang Xu}{UIUC}
\icmlauthor{Xulin Fan}{UIUC}
\icmlauthor{Zhong-Qiu Wang}{STU}
\icmlauthor{Xilin Jiang}{UC}
\icmlauthor{Romit Roy Choudhury}{UIUC}

%\icmlauthor{}{sch}
%\icmlauthor{}{sch}
\end{icmlauthorlist}

\icmlaffiliation{UIUC}{Department of Electrical and Computer Engineering, University of Illinois Urbana-Champaign, Champaign, USA}
\icmlaffiliation{UC}{Columbia University, NYC, USA}
\icmlaffiliation{STU}{Department of Computer Science and Engineering, Southern University of Science and Technology, Shenzhen, China}

\icmlcorrespondingauthor{Romit Roy Choudhury}{croy@illinois.edu}
\icmlcorrespondingauthor{Zhongweiyang Xu}{zx21@illinois.edu}

% You may provide any keywords that you
% find helpful for describing your paper; these are used to populate
% the "keywords" metadata in the PDF but will not be shown in the document
\icmlkeywords{Machine Learning, ICML}

\vskip 0.3in
]

% this must go after the closing bracket ] following \twocolumn[ ...

% This command creates the footnote in the first column
% listing the affiliations and the copyright notice.
% The command takes one argument, which is text to display at the start of the footnote.
% The \icmlEqualContribution command is standard text for equal contribution.
% Remove it (just {}) if you do not need this facility.

\printAffiliationsAndNotice{}  % leave blank if no need to mention equal contribution
% \printAffiliationsAndNotice{\icmlEqualContribution} % otherwise use the standard text.

\begin{abstract}
Blind Speech Separation (BSS) aims to separate multiple speech sources from audio mixtures recorded by a microphone array. 
The problem is challenging because it is a blind inverse problem, i.e., the microphone array geometry, the room impulse response (RIR), and the speech sources, are all unknown.
We propose \textbf{\method} to solve the BSS problem in an unsupervised, array-agnostic, and generative manner. 
The core idea builds on diffusion posterior sampling (DPS), but unlike DPS where the likelihood is tractable, {\method} must approximate the likelihood by formulating a separate optimization problem.
The solution to the optimization approximates room acoustics and the relative transfer functions between microphones.
These approximations, along with the diffusion priors, iterate through the {\method} sampling process and ultimately yield separated voice sources.
We only need a simple single-speaker speech diffusion model as a prior, along with the mixtures recorded at the microphones; no microphone array information is necessary. 
Evaluation results show that {\method} outperforms all baseline unsupervised methods while being comparable to supervised methods in terms of SDR. 
Audio demos and codes are provided at:
% \hl{A sound} demo is provided at:
\href{https://arraydps.github.io/ArrayDPSDemo/}{https://arraydps.github.io/ArrayDPSDemo/} and \href{https://github.com/ArrayDPS/ArrayDPS}{https://github.com/ArrayDPS/ArrayDPS}.
% Code base in~\href{https://github.com/ArrayDPS/ArrayDPS}{https://github.com/ArrayDPS/ArrayDPS}.
\end{abstract}

% \begin{abstract}
% Blind Speech Separation (BSS) aims to separate multiple speech sources from mixtures recorded by any microphone array. 
% We propose \textbf{\method} to solve the BSS problem in an unsupervised, array agnostic, and generative manner. 
% {\method} uses a speech diffusion model as a prior and then samples from the conditional distribution of speech sources given the multi-channel mixture. 
% This is a challenging task because it is a blind inverse problem; both relative room impulse responses (relative RIR) and sources are unknown, so the likelihood is intractable. 
% {\method} solves this problem by estimating the relative RIRs during diffusion posterior sampling (DPS), initialized by independent vector analysis (IVA). 
% After the {\method} sampling process, we can separate all the sources, along with all the relative RIRs (inferring room acoustics and source locations), without knowing any of the microphone array information. 
% Also, this method only needs a simple single-speaker speech diffusion model as a prior for any microphone array. We show that our method outperforms any other unsupervised methods, while performs similarly with array-fixed supervised method. Demos can be checked here:\href{https://arraydps.github.io/ArrayDPSDemo/}{https://arraydps.github.io/ArrayDPSDemo/}.
% % We show elaborate experiments and results to validate our method.
% \end{abstract}
\vspace{-20pt}
\section{Introduction}
\label{intro}
% introduce the problem of blind source separation
The cocktail party problem is a classic challenge in audio signal processing and machine learning~\cite{cocktail2, cocktail1}. It arises when multiple speakers talk simultaneously in the same room, and several microphones capture their voices. Each microphone records a mixture of all the speakers' voices. The objective is to separate these mixtures to recover the individual voice sources. 
In recent years, supervised learning based methods have shown remarkable potential to solve the cocktail party problem~\cite{wang_supervised_2018}. 
However, these methods are usually trained with supervision from speech datasets that were synthesized by using acoustic simulators.
Such synthetic supervision inherits several problems: \\
% These methods are usually trained supervised, using synthesized datasets~\cite{wang_supervised_2018}. Because the dataset is synthesized by some acoustic simulators, deep learning models can use clean speech for supervision. However, these methods have several problems: 
{\textbf{(1) Generalizability:}} 
The simulated dataset does not match real-world acoustic environments, causing model generalization issues~\cite{generalization1, generalization2}.\\
% Unsupervised methods like UNSSOR~\cite{unssor} generalize better, but do not exploit the source prior information (e.g., clear harmonic features in speech). \\
{\textbf{(2) Deterministic:}} These models are trained to be deterministic, and hence they output one fixed separation solution for a given mixture. 
This can give blurred results when the solution is not unique~\cite{ssdgp}, i.e., the probability distribution of the sources, given the mixture, is a multi-modal distribution.\\
% (lots of separation results are plausible to form the mixture).
{\textbf{(3) Fixed Array Geometry}:} These models usually assume the geometry of microphone arrays is fixed and known, preventing flexibility to unseen arrays~\cite{arrayagnostic3}.

To address all the problems above, we propose {\method}, a generative, unsupervised, and array-agnostic algorithm for speech separation, which fully exploits speech prior information.
Building on the diffusion posterior sampling (DPS) technique~\cite{DPS, pigdm}, {\method} treats speech separation as an inverse problem.
Briefly, our goal is to recover speech sources $s$ from multi-microphone mixture measurements $x = A(s) + \epsilon$, where $A(\cdot)$ denotes the source mixing process in reverberant conditions.
DPS samples from $p(s|x)$ by using a pre-trained diffusion prior $p(s)$, and a tractable likelihood model $p(x|s)$.
While we use a pre-trained speech source diffusion prior as well, unfortunately, the likelihood is intractable in our case.
This is because the distortion function $A(\cdot)$ depends not only on the unknown array geometry but also on the unknown RIRs over which the speech arrives at each microphone.
Without any knowledge of the array geometry, the RIRs, and the speech sources, this is referred to as a ``blind'' separation problem.

% To solve this problem, our core intuition is to estimatetake advantage of the speech prior $p(s)$ modeled by a diffusion model.
% At every diffusion posterior sampling step where the likelihood is needed, we use current best source separation estimates $\hat{s}$ to estimate all the filters, which allows likelihood calculation.
% With any source estimate $\hat{s}$, and the unknown parameters in $A$, we model the measurements $\hat{x}$ as $A(\hat{s})$.
% Minimizing $\hat{x}=A(\hat{s})$ against the true measurements $x$ yields an estimate of $\hat{A}=\underset{A}{\arg\max}\; p(x|\hat{s}, A)$.
% To enable using a speech diffusion prior to solve our blind separation problem, we propose to estimate the distortion function $A$ at each diffusion posterior sampling step to enable tractable likelihood.
To solve the problems mentioned above, at each diffusion sampling step, with the current source estimate $\hat{s}$, we estimate $A$ by: $\hat{A}=\underset{A}{\arg\max}\; p(x|\hat{s}, A)$. Then we use $p(x|\hat{s}, \hat{A})$ as a tractable approximation for the intractable likelihood $p(x|\hat{s})$. Lastly, similar to DPS, we can use the prior score and our approximated likelihood score to get the posterior score, which allows posterior sampling for separation.

We borrow ideas from Forward Convolutive Prediction ~\cite{fcp, fcp2, unssor} for the estimation of $A$ mentioned above, and we use Independent Vector Analysis (IVA)~\cite{iva1, iva2} for initialization to improve sampling stability.
% use \hl{independent vector analysis}~\cite{iva1, iva2} for initialization to improve sampling stability.
% from \hl{IVA}~\cite{iva1, iva2} to improve sampling stability.
% The $\hat{s}$ and $\hat{A}$ estimates make the likelihood tractable, and samples from the likelihood are fed into the posterior score at each time step, enabling samples from the posterior.
% As the posterior sampling , ultimately offering samples of the individual speech sources. 

Extensive evaluation shows that {\method} can achieve similar performance against recent supervised methods evaluated on ad-hoc microphone arrays, and performs the best among all unsupervised blind speech separation algorithms. The gains arise from using stronger speech priors from the diffusion models, as opposed to other unsupervised methods~\cite{unssor,spatialcluster1,iva1} that do not fully exploit speech source priors. As a result, they often suffer from problems like frequency permutation or spatial aliasing, which prevents correct separation.
In contrast, {\method} leverages the diffusion prior which automatically bypasses those problems.

Our main contributions are summarized as follows:\\
\textbf{Unsupervised:} Only a clean-speech pre-trained diffusion model is needed, mitigating the generalization issues that affect supervised methods.\\
\textbf{Array-Agnostic:} 
% Modeling the problem with virtual sources (and their relative transfer functions to each microphone) allows generalization to any microphone array geometry.\\
% The method does not rely on array patterns and allows generalization on any microphone array geometries.\\
{\method} does not rely on array patterns and can generalize on any microphone array geometries.\\
\textbf{Generative:} The method samples from the posterior, allowing multiple plausible separation results while fully exploiting the speech source prior.\\
\textbf{DPS for Multi-channel:}  \method~is the first method to solve the multi-channel array inverse problem with DPS; we believe that it can enable many other applications beyond speech separation in multi-channel array signal processing.

\vspace{-6pt}
\section{Problem Formulation}\label{sec:problem}
\vspace{-1pt}
In a reverberant environment, assume a $C$-channel microphone array is recording $K$ concurrent speakers. 
Let us denote the $K$ clean speech sources as $\Tilde{s}_1(t), \Tilde{s}_2(t), ..., \Tilde{s}_K(t) \in \mathbb{R}$, where $t\in\{0,1,2,...,T-1\}$ is the sample index of the waveform.
These clean speech sources get filtered by the room impulse responses (RIR) and arrive at the $C$ microphones, forming the reverberant speech source images $s_{k,c}(t)\in \mathbb{R}$, where $k\in\{1,2,...,K\}$ indexes the sources and $c\in\{1,2,...,C\}$ indexes the microphones. Thus, each microphone captures a mixture of $K$ source images and some measurement noise, and we denote this mixture as $x_c(t)\in \mathbb{R}$.
The speaker-to-microphone filtering and mixing process can be modeled as:
% Finally, the reverberant speech images at each microphone are mixed into mixtures $x_c(t)\in \mathbb{R}$, with some measurement noise. The filtering and mixing process follows:
\vspace{-2pt}
{\small
\setlength{\abovedisplayskip}{8pt} % Reduce space above the equation
\setlength{\belowdisplayskip}{4pt} % Reduce space below the equation
\begin{align}
    s_{k,c}(t) &= h_{k,c}(t) \ast \Tilde{s}_k(t) \label{eq: filter}\\\vspace{-3pt}
    x_c(t) = \sum_{k=1}^{K} {s}_{k,c}(t) &+ n_c(t), ~~~~~~~ n_c(t)\sim\mathcal{N}(0, \sigma_n^2I)\vspace{-2pt}
     \label{eq: waveform_mix}
\end{align}
}
where $h_{k,c}(t)$ is the RIR from the $k^{th}$ source location to the $c^{th}$ microphone, and $\ast$ is the convolution operation. $n_c(t)$ is the $c^{th}$ microphone's white noise. 
Fig.~\ref{fig:signal_model}(a) illustrates the signal model for the case of $C=2$ and $K=2$.
% The signal model is shown visually in Fig.~\ref{fig:signal_model} (a). 
\textbf{Our goal is to extract all $K$ reverberant speech source images at the reference channel} ($c=1$), \textbf{i.e., extract $s_{1:K,1}$ from all the mixtures} $x_{1:C}$, \textbf{without assuming any microphone array geometry, any source location, or any supervision}\footnote{For convenience, we use $s_{k_1:k_2, c_1:c_2}$ to denote $\{{s}_{k,c}(t) | c_1\leq c \leq c_2, k_1\leq k \leq k_2\}$, and same applies to all signals like $x_{1:C}$.}.

\begin{figure}[t]
% [H]
  \centering
  \includegraphics[width=1\linewidth]{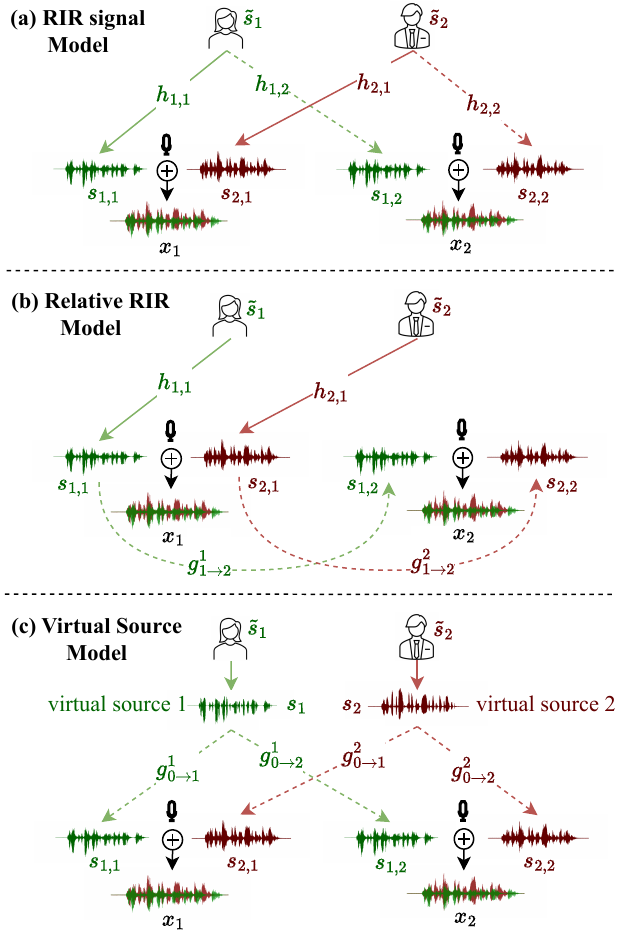}
\vspace{-20pt}
  \caption{(a) Signal mixing in the real world; (b) Relative RIR model; and (c) Relative RIR from virtual sources to real channels. Measurement noise is ignored in the figures.}
  \label{fig:signal_model}
  \vspace{-2pt}
\end{figure}
\begin{figure*}[ht]
% [H]
\vspace{-10pt}
  \centering
  \hspace*{-17pt}
  \includegraphics[width=1.03\linewidth]{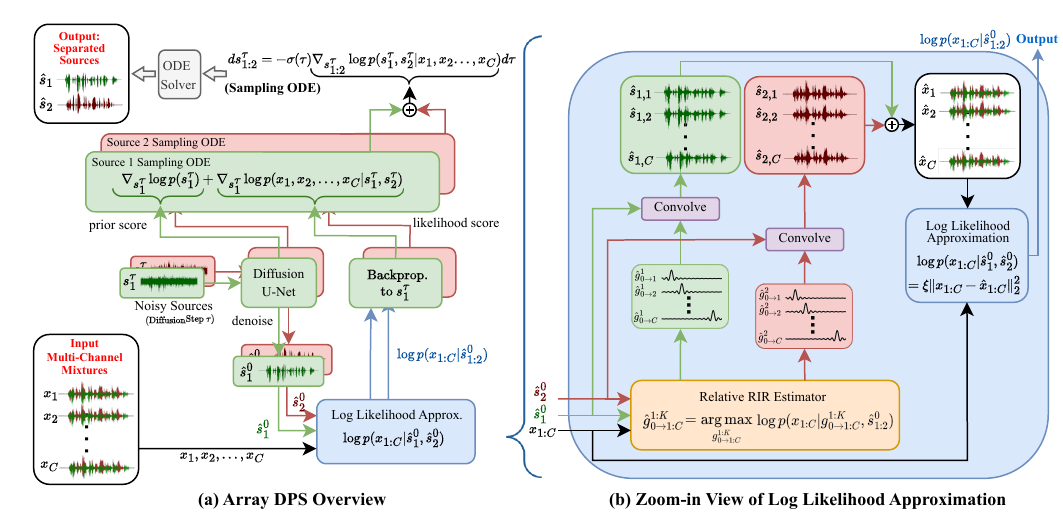}
\vspace{-20pt}
  \caption{An overview of {\method} for $K=2$ sources. Left figure (a) shows the pipeline for separation with diffusion posterior sampling, which uses the likelihood approximation pipeline shown in the right figure (b).}
  \label{fig:overview}
  \vspace{-10pt}
\end{figure*}

We approach this problem through a generative model, i.e., sampling from $p(s_{1:K,1} | x_{1:C})$.
Before formulating that, we first discuss the relative room impulse response (relative RIR) model often useful for microphone arrays (shown in Fig. \ref{fig:signal_model}(b)). 
% For array signal processing, the commonly used signal model is the relative room impulse response (relative RIR) model, as shown in Fig.~\ref{fig:signal_model}(b). 
The relative RIR models the linear relationship between any given channel and a designated reference channel (conventionally denoted as $c_1=1$).
% The relative RIR models the linear relationship between a given channel and the reference channel (channel 1). 
Note that from Eq.~\ref{eq: filter}, there exists a linear filter that can filter $s_{k, c_1}(t)$ to obtain $s_{k, c_2}(t)$ if $h_{k, c_1}(t)$ is invertible:
{\small
\setlength{\abovedisplayskip}{4pt} % Reduce space above the equation
\setlength{\belowdisplayskip}{4pt} % Reduce space below the equation
\begin{align}
    % s_{c_2, k}(t) &= h_{c_2,k}(t) \ast h^{-1}_{c_1,k}(t) \ast s_{k, c_1}(t) \\
    g^k_{c_1\rightarrow c_2}(t) &= h_{k, c_2}(t) \ast h^{-1}_{k, c_1}(t)\label{eq:relative_rir}\\
    s_{k, c_2}(t) &= g^k_{c_1\rightarrow c_2}(t) \ast s_{k, c_1}(t) \label{eq:relative_filter}
\end{align}
}
Eq.~\ref{eq:relative_rir} is called the relative RIR between channel $c_1$ and $c_2$ for speaker $k$, and Eq.~\ref{eq:relative_filter} is the convolution filtering process. 
% In applications, $c_1$ is often set to the reference channel $1$, as in Fig.~\ref{fig:signal_model} (b). 
The relative RIR model is useful because the reference channel can serve as an anchor, and all the other channel measurements can be modeled relative to this anchor.
% The relative RIR model is useful because it anchors all the processing in the reference channel, and then all other channels measurements are used  by relating to the reference channel using the relative RIRs.

\textit{Can we take one more step and avoid designating a microphone as a reference channel?}
% However, sometimes processing in reference channel might not be ideal. 
One idea is to imagine a virtual source $s_k$ for each speaker and apply a relative RIR filter to map it to a real channel. 
% pretend these are at a reference channel $c_0$, 
% One idea is to imagine a virtual microphone, i.e., a virtual channel, that can be filtered to any real channel by some relative RIR.
Fig.\ref{fig:signal_model}(c) shows an example where 
relative RIR $g^1_{0\rightarrow 1}(t)$ maps the virtual source $s_1(t)$ to the source image $s_{1,1}(t)$ at channel $1$.
% is assumed to be at channel $c_0$, and the relative RIR $g^1_{0\rightarrow 1}(t)$ maps it to channel $c_1$ to model the source image $s_{1,1}(t)$.
The same occurs for all $\langle \text{virtual source,  real channel} \rangle$ pairs:
{\small
\setlength{\abovedisplayskip}{5pt} % Reduce space above the equation
\setlength{\belowdisplayskip}{5pt} % Reduce space below the equation
\begin{equation}\label{eq:virtual_signal_model}
    s_{k,c}(t) = g^k_{0\rightarrow c}(t) \ast s_k(t), ~~~~ c\in\{1, 2, ..., C\}
\end{equation}
}
\textit{What is the advantage of modeling such a virtual source?}
The key reason is that virtual sources (along with their relative RIRs) offer the flexibility to model many system configurations.
For instance, a virtual source could be the anechoic speaker's voice signal, or it could be the measurement at a real channel $c_1$, or it could even be the signal at an imaginary microphone $c_0$ placed near the speakers.
Whatever the virtual signal represents, the corresponding relative RIR filters can adapt to match the measurements $x_{1:C}$ at the real microphones.
This flexibility allows (1) treating the reference channel $c_1$ the same as all other channels, and (2) more freedom in the optimization, allowing better performance in some algorithms~\cite{unssor}.

% Observe that the virtual source can be any signal that can be mapped to the source image of any real channel by a linear filter. For example, for source $k$, it can be the original anechoic source signal $\Tilde{s}_k(t)$, any real channel source image $s_{k,c}(t)$, or a signal that's measured by a imaginary microphone one meter away from the speaker. This flexibility allows 1) treating reference channel 1 the same other all other channels, 2) more freedom in objective optimization, allowing better performance in some algorithms~\cite{unssor}.
\vspace{-8pt}
\section{{\method}}\label{sec:method}
\vspace{-3pt}
% This section explains ArrayDPS, starting with some background and overview, followed by the algorithmic modules.

% This section explains our ArrayDPS method, starting with some background and overview, followed by our relative RIR estimation module in Sec.~\ref{sec:fcp}, the posterior score approximation in Sec.~\ref{sec:score_cond}, IVA for sampling initialization in Sec.~\ref{sec:iva}, and finally, the ArrayDPS algorithms in Sec.~\ref{sec:algo}.

% \vspace{-6pt}
\subsection{Brief Background and Overview}
\label{sec:overview}

\textbf{Background:} Our problem is formulated as $x_{1:C} = A(s_{1:K}) + n_{1:C}$, where $A$ can be understood as the filter and sum process in Fig.~\ref{fig:signal_model}(c).
% Our desired output is the separated speech from the posterior, $p(s_{1:K}|x_{1:C})$ as shown in the top of Fig.~\ref{fig:overview} (a) for $K=2$.
Separating the sources can be viewed as sampling from the posterior $p(s_{1:K}|x_{1:C})$.
For this, we use Diffusion Posterior Sampling (DPS)~\cite{DPS, pigdm, score} which samples from the posterior by solving a score-based probabilistic flow ordinary differential equation (ODE):
{\small
\setlength{\abovedisplayskip}{5pt} % Reduce space above the equation
\setlength{\belowdisplayskip}{5pt} % Reduce space below the equation
\begin{equation}\label{eq:ode_method_overview}
    ds_{1:K}^\tau=-\sigma(\tau)\nabla_{s_{1:K}^\tau} \log p(s^\tau_{1:K}|x_{1:C})d\tau
\end{equation}
}
where $s_k^\tau$ is the noisy version of $s_k$ at diffusion time~ $\tau$. 
The ODE is adapted from EDM~\cite{edm}, assuming a linear noise schedule ($\sigma(\tau)=\tau$). 
Score-based diffusion model~\cite{score, edm} suggests that sampling from $p(s_{1:K}|x_{1:C})$ is basically first initializing the sources $s_{1:K}^{\tau_\text{max}}\sim \mathcal{N}(0, \sigma^2(\tau_{\text{max}})I)$, and then getting the separated virtual sources $s_{1:K}$ by solving the ODE in Eq.~\ref{eq:ode_method_overview} until $\tau=\tau_{\text{min}}\simeq0$. 
We elaborate on the details in Appendix~\ref{app:diff}.
% $\hat{s}^0_k=s_k^{\tau}+\sigma^2(\tau)\nabla_{s_k^\tau}p(s_k^\tau)$
% Details of score-based diffusion and DPS can be checked in Appendix~\ref{app:diff}.

To acquire the posterior score $\nabla_{s_{1:K}^\tau} \log p(s^\tau_{1:K}|x_{1:C})$ in the ODE above, DPS decomposes the posterior score into a prior score and a likelihood score, where the former is estimated from a pre-trained diffusion model, and the latter is approximated based on a tractable likelihood model (also explained in Appendix~\ref{app:diff}). 
However, in our ``blind'' source separation problem, the likelihood is intractable, and solving this is at the crux of this paper.
% Thus, we approximate the intractable likelihood (in the blue module at the bottom of Fig.~\ref{fig:overview} (a) which is expanded in Fig.~\ref{fig:overview} (b)). 

\textbf{Overview:} 
Fig.~\ref{fig:overview}(a) shows the {\method} model.
The multi-channel audio mixtures $x_{1:C}$ are inputs (shown at the bottom left of the figure) and the outputs are separated virtual sources from the ODE (shown at the top of the figure).
The whole {\method} pipeline essentially develops the components needed for the ODE to reverse the diffusion process.
% In Fig.~\ref{fig:overview} (a), ArrayDPS takes the multi-channel audio mixtures $x_{1:C}$ as inputs (shown at the bottom left of the figure); and outputs the separated virtual sources from the posterior, $p(s_{1:K}|x_{1:C})$ (shown in the top of Figure).
% At a high level, the whole {\method} pipeline is solving the ODE shown in Eq.~\ref{eq:ode_method_overview} to reverse the diffusion process.
At each diffusion time $\tau$, {\method} can be summarized in five key steps (following Fig.~\ref{fig:overview}(a)):\\
\textbf{(Step 1)}:  The final posterior score $\nabla_{s_{1;K}^\tau} \log p(s^\tau_{1:K}|x_{1:C})$ in the ODE is constructed by adding the prior score $\nabla_{s_{k}^\tau}\log~p(s_{k}^\tau)$ and likelihood score $\log \nabla_{s^\tau_{k}}p(x_{1:C}|s^\tau_{1:K})$ for {\em each} source.\\
\textbf{(Step 2)}: 
The prior score is estimated using Tweedie's Formula: $\nabla_{s_k^\tau}p(s_k^\tau)=(\hat{s}^0_k-s_k^{\tau}) /\sigma^2(\tau)$. 
The $\hat{s}^0_k$ in the formula, which is the estimate of clean source, is the output of a diffusion denoising U-Net with noisy source $s_k^{\tau}$ as input.\\
% \textbf{(Step 2)}: Noisy sources $s_1^{\tau}, s_2^{\tau}$ are denoised by diffusion denoising U-Net to yield $\hat{s}^0_1, \hat{s}^0_2$. These serve as estimates of clean sources. Then the prior score can be estimated using Tweedie's Formula: $\nabla_{s_k^\tau}p(s_k^\tau)=(\hat{s}^0_k-s_k^{\tau}) /\sigma^2(\tau)$\\
\textbf{(Step 3)}:
The likelihood $p(x_{1:C}|\hat{s}_{1:K})$ is intractable because the distortion function $A(\cdot)$ (filtering+mixing) includes the unknown relative RIRs. 
However, given the relative RIRs, the log likelihood $\log~p(x_{1:C} | \hat{s}^0_{1:K}, g^{1:K}_{0\rightarrow {1:C}})$ becomes tractable. 
This motivates our {\em log likelihood approximation} model expanded in Fig.~\ref{fig:overview}(b).

% Thus, we propose to estimate the relative RIRs and then use them to calculate our tractable log likelihood as shown in Fig.~\ref{fig:overview} (b).
Given the measured mixture $x_{1:C}$ and the denoised estimates $\hat{s}^0_{1:K}$ from time $\tau$, the yellow module in Fig.~\ref{fig:overview}(b) estimates the relative RIRs $\hat{g}^{1:K}_{0\rightarrow {1:C}}$ in a maximum likelihood manner (detailed next in Sec.~\ref{sec:fcp}).
% First, following the yellow module, the virtual source estimates $\hat{s}^0_{1:2}$ and the mixture $x_{1:C}$ are used to estimate the relative RIRs $\hat{g}^{1:K}_{0\rightarrow {1:C}}$ in a maximum likelihood manner (discussed in next Sec.~\ref{sec:fcp}). 
Recall these relative RIRs are transfer functions from the virtual sources to the real microphone channels, hence appropriately convolving $\hat{g}^{1:K}_{0\rightarrow {1:C}}$ with the virtual source estimates $\hat{s}_{1:K}$ gives us the {\em actual} source image estimates $\hat{s}_{1:K, 1:C}$ at all the microphone channels.
% These relative RIRs can be appropriately convolved with the virtual source estimates $\hat{s}_{1:K}$ to obtain the {\em actual} source estimates $\hat{s}_{1:K, 1:C}$ at all the microphone channels.
Adding up these actual source image estimates yields the estimated mixtures $\hat{x}_{1:C}$. 
% The estimated $\hat{g}^{1:K}_{0\rightarrow {1:C}}$ are then used to filter the virtual source estimates $\hat{s}_{1:K}$ to get source estimates $\hat{s}_{1:K, 1:C}$ at each of the channels.
% These can then be added to obtain the estimated mixtures $\hat{x}_{1:C}$. 
Finally, based on the Gaussian noise model $n_c(t)$ in Eq.~\ref{eq: waveform_mix}, the log likelihood can be directly calculated as $\|x_{1:C}-\hat{x}_{1:C}\|_2^2$, weighted by a signal energy-based scaler $\xi$.\\
% Our log likelihood approximation module (in blue) uses the denoised sources $\hat{s}^0_{1:K}$ and the recorded mixture $x_{1:C}$ to approximate $\log~p(x_{1:C}|\hat{s}^0_{1},\hat{s}^0_{2})$. Details will be discussed in the next paragraph following Fig.~\ref{fig:overview} (b)\\
 \textbf{(Step 4)}: Back in Fig.~\ref{fig:overview}(a), the approximated log likelihood is back-propagated to $s_k^\tau$ to compute the likelihood score. \\
\textbf{(Step 5)}: This likelihood score and the prior score (from step 2) are combined to obtain the posterior score at $\tau$.

% \textbf{Log Likelihood Approximation Overview:}
% To approximate log likelihood $\log p(x_{1:C}|\hat{s}^0_{1:K})$, we first notice that it is not directly tractable, because we do not know the relative RIR filters $g^{1:K}_{0\rightarrow {1:C}}$. (Note that in DPS this likelihood has to be tractable.) According to the virtual source model as in Fig.~\ref{fig:signal_model} (c), $\log p(x_{1:C} | \hat{s}^0_{1:K}, g^{1:K}_{0\rightarrow {1:C}})$ is tractable because the relative RIR is given. Thus, we propose to estimate a relative RIR and then use the estimated one to calculate the log likelihood. Our tractable log likelihood approximation is then shown in Fig.~\ref{fig:overview} (b) including three steps:\\
% \textbf{(Step 1)}
% As shown in the yellow module, We first estimate the relative RIRs in a maximum likelihood manner. The relative RIR estimator is able to estimate $g^{1:K}_{0\rightarrow {1:C}})$ from the denoised sources $\hat{s}^0_{1:K}$ and all channel mixtures $x_{1:C}$, with differentiable analytical solution. Details of the estimator will be discussed in the next subsection (Sec.~\ref{sec:fcp}).\\
% \textbf{(Step 2)}
% Then the estimated relative RIRs are used to filter the estimated sources to get all-channel estimated source images $\hat{s}_{k,c}$, which can then be used to mix the estimated mixtures $\hat{x}_{1:C}$.\\
% \textbf{(Step 3)}
% Lastly, based on the Gaussian likelihood noise model as in Eq.~\ref{eq: waveform_mix}, the log likelihood can be directly calculated as $\|x_{1:C}-\hat{x}_{1:C}\|_2^2$ multiplied by some scaler $\xi$.

Note that Fig.~\ref{fig:overview}(a) is separating the {\em virtual} sources $s_{1:K}$. 
Since we want to estimate the source images at the reference channel, namely $s_{1:K,1}$, we convolve with the estimated relative RIRs $\hat{g}^k_{0\rightarrow 1}$ to obtain the separated source images $\hat{s}_{k,1}$.

\vspace{-4pt}
\subsection{Relative RIR Estimation}\label{sec:fcp}
Since log likelihood $\log~p(x_{1:C} | \hat{s}^0_{1:K}, g^{1:K}_{0\rightarrow {1:C}})$ is tractable, we propose to estimate the $g^{1:K}_{0\rightarrow {1:C}}$ in a maximum likelihood manner using mixtures $x_{1:C}$ and denoised sources $\hat{s}^0_{1:K}$:
{\small
\setlength{\abovedisplayskip}{5pt} % Reduce space above the equation
\setlength{\belowdisplayskip}{1pt} % Reduce space below the equation
\begin{align}
        \hat{g}^{1:K}_{0\rightarrow {1:C}}~~=~~\underset{g^{1:K}_{0\rightarrow {1:C}}}{\arg\max}\;~~ \log\;~p(x_{1:C} | g^{1:K}_{0\rightarrow {1:C}}, \hat{s}^0_{1:K})\label{eq:relativerir_ml}
\end{align}}
However, previous work has shown that it is better to estimate the filters in spectral domain~\cite{eras, consolidated} instead of in the time domain shown above. 
Hence, we transform to the Short-Time-Fourier-Transform (STFT) domain.
In STFT domain, Eq.~\ref{eq:virtual_signal_model} becomes:
{\small
\setlength{\abovedisplayskip}{5pt} % Reduce space above the equation
\setlength{\belowdisplayskip}{5pt} % Reduce space below the equation
\begin{align} 
  S_{k,c}(l, f) &= G^k_{0\rightarrow c}(f) \cdot S_{k}(l, f)\label{eq:rtf}
\end{align}
}
where $S_{k,c}(l, f)$ and $S_{k}(l, f)$ are the STFT of $s_{k,c}(t)$ and $s_k(t)$, respectively. 
$G^k_{0\rightarrow c}(f)$ is the discrete Fourier transform of $g^k_{0\rightarrow c}(t)$, where the FFT size is the same as the STFT. 

\textbf{Relaxing narrowband assumption:} The above equation is unrealistic for speech signals because it makes a narrowband approximation; it assumes that the length of the relative RIR filter is shorter than the FFT size. 
In real environments, the relative RIR filter can be in hundreds of milliseconds, while the FFT size is usually only tens of milliseconds. 
To relax this assumption, we model the relative RIR over multiple time-frames as $G^k_{0\rightarrow c}(l, f)$.
This multi-frame filter in the STFT domain is then {\em convolved} with the virtual source as follows:
% {\small
% \setlength{\abovedisplayskip}{3pt} % Reduce space above the equation
% \setlength{\belowdisplayskip}{3pt} % Reduce space below the equation
% \begin{align} 
%   S_{k,c}(l, f) &= G^k_{0\rightarrow c}(l, f) *_l S_{k}(l, f)\label{eq:stft_relative}\\
%    G^k_{0\rightarrow c}(l, f) *_l S_{k}(l, f) &= \sum_{j=-F}^{P} G^k_{0\rightarrow c}(j, f) S_{k}(l-j, f)\label{eq:stft_relative2}
% \end{align}
% }
{\small
\setlength{\abovedisplayskip}{6pt} % Reduce space above the equation
\setlength{\belowdisplayskip}{6pt} % Reduce space below the equation
\begin{align} 
   S_{k,c}(l, f) &= G^k_{0\rightarrow c}(l, f) *_l S_{k}(l, f)\label{eq:stft_relative}
\end{align}
}
where $l,f$ are frame and frequency index, and $F, P$ are the number of future and past frames of the relative RIR $G^k_{0\rightarrow c}(l, f)$, all in STFT domain.
Observe that $*_l$ in~Eq.\ref{eq:stft_relative} denotes {\em convolution} in the frame dimension:
{\small
\setlength{\abovedisplayskip}{3pt} % Reduce space above the equation
\setlength{\belowdisplayskip}{3pt} % Reduce space below the equation
\begin{align} 
   G^k_{0\rightarrow c}(l, f) *_l S_{k}(l, f) &= \sum_{j=-F}^{P} G^k_{0\rightarrow c}(j, f) S_{k}(l-j, f)\label{eq:stft_relative2}
\end{align}
}
Thus, the complete signal model --- that connects the virtual sources to the actual mixtures at the microphones --- can be modeled as:
{\small
\setlength{\abovedisplayskip}{3pt} % Reduce space above the equation
\setlength{\belowdisplayskip}{3pt} % Reduce space below the equation
\begin{align}\label{eq:spec_model}
    X_c(l, f) =\sum\limits_{k=1}^{K} G^k_{0\rightarrow c}(l, f)*_l S_{k}(l, f) + N_c(l, f)
\end{align}
}
\textbf{ML Estimation:} Under this setting, we intend to estimate relative RIRs $G^k_{0\rightarrow c}(l,f)$ in a maximum likelihood manner.
We found that the Forward Convolutive Prediction (FCP) method~\cite{fcp, fcp2, unssor} is able to achieve this. 
Basically, FCP estimates relative RIRs $G^k_{0\rightarrow c}(l, f)$ from the multi-channel mixtures $X_c(l, f)$ and the virtual source estimate $\hat{S}_{k}(l, f)$ by solving a linear problem:
{\fontsize{8.5pt}{11pt}\selectfont
\setlength{\abovedisplayskip}{5pt} % Reduce space above the equation
\setlength{\belowdisplayskip}{5pt} % Reduce space below the equation
\begin{align}
    % \hat{\lambda}^k_c(l,f) &= \text{some expression defining lambda} \\
    \hat{G}^k_{0\rightarrow c}(l, f) &= \text{FCP} (X_c(l, f), \hat{S}_{k}(l, f))\label{eq:fcp_func}\\
    = \underset{G^k_{0\rightarrow c}(l, f)}{\arg\min} &\sum_{l,f} \frac{1}{\hat{\lambda}^k_c(l,f)} \left|X_c(l, f) - G^k_{0\rightarrow c}(l, f)*_l\hat{S}_{k}(l, f)\right|^2\label{eq:fcp_obj}\\
    % &\hspace{4cm}+\epsilon\left\|G^k_{1,c}\right\|_2^2\\
    % &\hat{\lambda}^k_c(l,f) = 
    \hat{\lambda}^k_c(l,f) = &\frac{1}{C} \sum_{c=1}^C |X_{c}(l, f)|^2+ \epsilon\cdot \max\limits_{l,f}\frac{1}{C} \sum_{c=1}^C |X_{c}(l,f)|^2\label{eq:fcp_lambda}
\end{align}
}
As shown above, FCP is solving a weighted least squares problem, so it has an analytical solution as in~\cite{unssor}. 
The weight $\hat{\lambda}^k_c(l,f)$ aims to prevent overfitting to high-energy STFT bins, and $\epsilon$ is a parameter to adjust the weight. 
Note that two important parameters for FCP are $F$ and $P$ from Eq.~\ref{eq:stft_relative2}, which control the past and future filter lengths of the relative RIR. 
We prove FCP is equivalent to the maximum likelihood relative RIR estimator in Appendix~\ref{app:analysis} Theorem~\ref{thm1}. 
Thus, with the FCP estimated filters, we can obtain the estimated source images at all microphone channels:
{\small
\setlength{\abovedisplayskip}{5pt} % Reduce space above the equation
\setlength{\belowdisplayskip}{5pt} % Reduce space below the equation
\begin{equation}\label{eq:fcp}
    % \hat{S}^{\text{FCP}}_{k,c}(l, f) = \sum_{j=-F}^{P} \hat{G}^k_{1,c}(j, f) \hat{S}_{1, k}(l-j, f)
    \hat{S}_{k,c}(l, f) = \hat{G}^k_{0\rightarrow c}(l, f) *_l \hat{S}_{k}(l, f)
\end{equation}
}
\vspace{-20pt}
\subsection{Posterior Score Approximation}\label{sec:score_cond}
\vspace{-5pt}
The log likelihood approximation gives us relative RIR estimates and the corresponding source image estimates at each microphone channel.
All of these were computed based on clean source estimates denoised by the diffusion model at time $\tau$.
Using all the available estimates, {\method} now needs to compute the posterior score, and then run the ODE solver in Eq.~\ref{eq:ode_method_overview} to ultimately get samples from $p(s_{1:K}|x_{1:C})$.
We describe this process next.
% As mentioned in Sec.~\ref{sec:problem}, \method~aims to solve the separation problem by sampling from $p(s_{1:K}|x_{1:C})$, i.e., sampling the virtual  sources given the multi-channel mixtures. According to Eq.~\ref{eq:ode_method_overview}, to allow for diffusion posterior sampling, we need to run an ODE solver on:
% {\small
% \begin{equation}\label{eq:separationode}
% d{s}^\tau_{1:K} = -\sigma(\tau) \nabla_{s^\tau_{1:K}} \log p(s^\tau_{1:K} | x_{1:C}) d\tau
% \end{equation}
% }

Assume we have a pre-trained diffusion denoising model $D_\theta(s^\tau_{k}, \sigma(\tau))$ trained on single-channel speech sources; the corresponding score model $S_\theta(s^\tau_{k}, \sigma(\tau))$ approximates $\nabla_{s^\tau_{k}} \log ~p (s^\tau_{k})$. 
Based on this, we present here a novel approximation of the posterior score $\nabla_{s^\tau_{1:K}}\log p(s^\tau_{1:K} | x_{1:C})$ to enable sampling using the ODE above. 
All the derivation, proof, and analysis of our approximation is in Appendix~\ref{app:posterior_score}. 
Below is our final result, where we first denoise the noisy sources $s^\tau_{1:K}$ and then transform to STFT domain:
{\small
\setlength{\abovedisplayskip}{5pt} % Reduce space above the equation
\setlength{\belowdisplayskip}{3pt} % Reduce space below the equation
\begin{equation}
    \hat{s}^0_k = D_\theta(s^\tau_k, \sigma(\tau)),\; \quad \;\hat{S}^{0}_{k}=\text{STFT}(\hat{s}^0_k),\;~~\;k\in\{1,...,K\}
\end{equation}
}
Then we use FCP to estimate all relative RIRs:
{\small
\setlength{\abovedisplayskip}{3pt} % Reduce space above the equation
\setlength{\belowdisplayskip}{3pt} % Reduce space below the equation
\begin{align}
    \hat{G}^{k}_{0\rightarrow c} &= \text{FCP}(X_{c},\;\;\hat{S}^{0}_{k}), \; \quad \;\hat{S}_{k,c}= \hat{G}^{k}_{0\rightarrow c} *_l \hat{S}_{k}
\end{align}
}
% \hl{Need one sentence here on the backpropagation step to compute the likelihood score using the ouput of our likelihood approximation module.}
The relative RIRs enable log likelihood estimates (at the output of Fig.~\ref{fig:overview}(b)), and to get their respective likelihood scores, we back-propagate to $s^{\tau}_1, s^{\tau}_2$ to calculate the gradient.
In the final step, the likelihood scores and the prior scores are summed up to give the posterior score:
% Thus, our final approximated posterior score is the sum of the prior and likelihood scores as follows:
{\fontsize{9pt}{0pt}\selectfont
\vspace{-5pt}
\setlength{\abovedisplayskip}{5pt} % Reduce space above the equation
\setlength{\belowdisplayskip}{5pt} % Reduce space below the equation
\begin{align}
    \nabla&_{s^\tau_{1:K}} \log p(s^\tau_{1:K} | x_{1:C}) \nonumber~~\simeq~~ \sum_{k=1}^{K}S_\theta(s^\tau_{k}, \sigma(\tau))\nonumber ~~+ \\
    & \quad \quad \quad \sum_{c=1}^{C}\nabla_{s^\tau_{1:K}}\xi(\tau)\left\|x_c-\text{ISTFT}\left(\sum_{k=1}^{K}\hat{S}^{0}_{k,c}\right)\right\|_2^2\label{eq:final_score}
\end{align}
}
ISTFT above denotes Inverse-Short-Time-Fourier-Transform. Note that the functional relationship between $s^\tau_{1:K}$ and $\hat{S}^{0}_{1:K,1:C}$ is fully differentiable, as operations $D_\theta$, STFT, ISTFT and FCP are all differentiable. 
Thus, in the second term in Eq.~\ref{eq:final_score}, the likelihood score can be calculated by back-propagating to $s^\tau_{1:K}$. 
Again, detailed proof of correctness, analysis, and relation to Expectation Maximization is included in Appendix~\ref{app:posterior_score}.

\vspace{-5pt}
\subsection{IVA as Initialization}\label{sec:iva}
\vspace{-1pt}
At the early stages of solving the separation ODE, the source estimates $\hat{s}_{1:K}^{0}$ are extremely inaccurate, which means the estimated relative RIRs are inaccurate. 
To mitigate this, we propose to use an Independent Vector Analysis algorithm~\cite{iva1,iva2} as initialization, for both the sources and the relative RIRs.

IVA is a classic unsupervised algorithm for multi-channel blind speech separation, which has limited performance in reverberant conditions~\cite{iva1, iva2}. For initialization purposes, we use IVA to separate the sources at the reference channel ($c_1=1$) and then use these separated sources to further estimate the relative RIRs $\hat{G}^{k,\text{IVA}}_{1\rightarrow c}$:
{\small
\setlength{\abovedisplayskip}{1pt} % Reduce space above the equation
\setlength{\belowdisplayskip}{1pt} % Reduce space below the equation
\begin{align}
    \hat{S}^{\text{IVA}}_{1,1}, ~~... ~~ \hat{S}^{\text{IVA}}_{K,1} &~=~ \text{IVA}(X_{1:C})\\
    \hat{G}^{k,\text{IVA}}_{1\rightarrow c} ~=~ \text{FCP}(X_c, \hat{S}^{\text{IVA}}_{k,1})&\text{, } ~~k\in\{1,...,K\},~~c\in\{1,...,C\}
\end{align}
}
Then the IVA separated sources $\hat{S}^{\text{IVA}}_{1:K,1}$ are used to initialize the starting point of posterior sampling, and the estimated relative RIRs $\hat{G}^{k,\text{IVA}}_{1\rightarrow c}$ are used to guide the posterior sampling in early stages, which will be discussed next in Sec.~\ref{sec:algo}.
% \begin{equation}
% S_{1,k}(l,f)\sim \mathbb{CN}(0, \sigma^2_k)
% \end{equation}

\vspace{-5pt}
\subsection{Algorithm}\label{sec:algo}
\vspace{-1pt}
Our proposed {\method} is specified in Algorithm~\ref{alg:ubliss}, which calls the posterior score approximation function, \texttt{GET\_SCORE($\cdot$)}, presented in Algorithm~\ref{algo:score_cond}. 
We intuitively explain blocks of the algorithm below while the details can be found in Appendix~\ref{app:algorithms}.
\begin{algorithm}[tb]
\caption{\method}
\label{alg:ubliss}
\begin{algorithmic}[1]
\Require $\{N, \sigma_{i\in\{0,...,N\}}, \gamma_{i\in\{0,...,N-1\}}, S_{\text{noise}}\}$
% \Statex \textbf{Inputs}: $\{x_{1:C}, K\}$  % This line will not be numbered
\Statex \hspace{-\algorithmicindent}\textbf{Inputs}: mixtures $x_{1:C}$, number of sources $K$  % This line uses negative horizontal space to shift left
\State $X_{1:C} \gets \text{\small{STFT}}(x_{1:C})$
\State $\hat{S}^{\text{{IVA}}}_{1,1},..., \hat{S}^{\text{{IVA}}}_{K,1} \gets \text{{IVA}}(X_{1:C})$\Comment{{\textcolor{magenta}{IVA separation}}}
\State $\hat{s}^{\text{{IVA}}}_{1,1},..., \hat{s}^{\text{{IVA}}}_{K,1} \gets \text{\small{ISTFT}}(\hat{S}^{\text{{IVA}}}_{1,1}),..., \text{\small{ISTFT}}(\hat{S}^{\text{\small{IVA}}}_{K,1})$
\For{$k=1$ \textbf{to} $K$, $c=1$ \textbf{to} $C$}
    \State $\hat{G}^{k,\text{{IVA}}}_{1\rightarrow c} \gets \text{\small{FCP}}(X_c, \hat{S}^{\text{{IVA}}}_{k,1})$ \Comment{\small{\textcolor{magenta}{IVA Init. relative RIR}}}
\EndFor
\For{$k=1$ \textbf{to} $K$}
    \State Initialize $s^{\tau_0}_{k} \sim \mathcal{N}(\hat{s}^{\text{IVA}}_{k,1}, \sigma_0^2I)$\Comment{\textcolor{magenta}{\small{Diff. Init. with IVA}}}
\EndFor
\For {$i=0$ \textbf{to} $N-1$} \Comment{\textcolor{magenta}{\small{$2^{\text{nd}}$order Heun stochastic sampler}}}
\State $\hat{\sigma}_i\gets {\sigma}_i + \gamma_i \sigma_i$
\State Sample $\epsilon_i\sim\mathcal{N}({0, S_{noise}^2 I})$
\For{$k=1$ \textbf{to} $K$}
\State $\hat{s}^{\tau_i}_{k} \gets s^{\tau_i}_{k} + \sqrt{\hat{\sigma}_i^2-\sigma_i^2}\epsilon_i$ \Comment{\small{\textcolor{magenta}{add stochasticity}}}
\EndFor
\State $score \gets$ {\scriptsize \texttt{GET\_SCORE}}($\hat{s}^{\tau_i}_{1:K}$, $x_{1:C}$, $\hat{G}^{k,\text{{IVA}}}_{1\rightarrow c}$, $i$) \Comment{\small{{\textcolor{magenta}{Algo.\ref{algo:score_cond}}}}}
\State $d_i \gets -\hat{\sigma}_i\cdot score$
\State $s^{\tau_{i+1}}_{1:K}\gets \hat{s}^{\tau_{i}}_{1:K}+(\sigma_{i+1}-\hat{\sigma}_i)\cdot d_i$ \Comment{\textcolor{magenta}{$1^\text{st}$ order Euler step}}
\If{$\sigma_i\neq 0$}\Comment{\textcolor{magenta}{$2^\text{nd}$ order correction}}
\State $score' \gets$  {\scriptsize \texttt{GET\_SCORE}}($s^{\tau_{i+1}}_{1:K}$, $x_{1:C}$, $\hat{G}^{k,\text{{IVA}}}_{1\rightarrow c}$, $i+1$)
\State $d_i'\gets-\sigma_{i+1}\cdot$ $score'$
\State $s^{\tau_{i+1}}_{1:K}\gets \hat{s}^{\tau_{i}}_{1:K}+\frac{1}{2}(\sigma_{i+1}-\hat{\sigma}_i)\cdot (d_i + d_i')$
\EndIf
\EndFor
% \State \Return $s^{\tau_{N}}_{1:K}$ \Comment{{\small \textcolor{magenta}{return separated virtual sources}}}
\State $\hat{S}_{1:K} \gets \text{\small{STFT}}(s^{\tau_{N}}_{1:K})$
\For{$k=1$ \textbf{to} $K$} \Comment{{\small \textcolor{magenta}{transform to channel 1 with FCP}}}
\State $\hat{G}^{k,\text{final}}_{0\rightarrow 1} \gets \text{\small{FCP}}(X_1, \hat{S}_{k})$
\State $\hat{s}_{k,1} \gets \text{ISTFT}(\hat{G}^{k,\text{final}}_{0\rightarrow 1} * \hat{S}_{k})$
\EndFor
% \State \Return $\hat{s}_{1:K,1}$ \Comment{{\small \textcolor{magenta}{return sources in reference channel 1}}}
\State \Return $s^{\tau_{N}}_{1:K}$, $\hat{s}_{1:K,1}$ \Comment{{\small \textcolor{magenta}{separated virtual and channel 1 sources}}}
% \EndFor
\end{algorithmic}
\end{algorithm}

\vspace{-2pt}
{\bf \method~Algorithm:}
The \method~algorithm in Algorithm~\ref{alg:ubliss} needs a few sampler parameters to be pre-defined: sampling steps and schedule $N$, $\sigma_{i\in\{0,...,N-1\}}$, and $\gamma_{i\in\{0,...,N-1\}}$, $S_{\text{noise}}$, which all follow the stochastic sampler in EDM~\cite{edm}. 
The algorithm then takes the mixtures $x_{1:C}$ and number of sources $K$ as inputs, to separate source images in the reference channel. 
From lines 1-6, IVA is used to separate reference-channel source images, which are further used to estimate the relative RIRs. 
Then in lines 7-9, the IVA separated sources are used as diffusion initialization. 
Line 10-24 is a second-order Heun stochastic sampler following EDM~\cite{edm}, where the score is using our Posterior Score Approximation, \texttt{GET\_SCORE($\cdot$)}, in Algorithm~\ref{algo:score_cond}. 
Note that the sampler is sampling the virtual sources instead of reference channel source images; we discuss the reason later in Appendix~\ref{app:fcp}. 
Then, to output separated source images in the reference channel, lines 25-30 use FCP to estimate the relative RIRs that allow filtering virtual sources to reference channel source images.
\begin{algorithm}[tb]
\caption{Posterior Score Approximation} \label{algo:score_cond}
\begin{algorithmic}[1]
\Require $\{D_\theta, \sigma_{i\in\{0,...,N\}}, N_{\text{ref}}, N_{\text{fg}}, \xi_1(\tau), \xi_2(\tau), \lambda\}$
\Function{\texttt{get\_score}}{$s^{\tau_i}_{1:K}$, $x_{1:C}$, $\hat{G}^{k,\text{IVA}}_{1\rightarrow c}$, $i$}
\For{$k=1$ \textbf{to} $K$}
    \State $\hat{s}^0_k \gets D_\theta(s^{\tau_i}_{k}, \sigma_i)$\Comment{\small{\textcolor{magenta}{diffusion denoiser}}}
    \State $\hat{S}^0_k \gets \text{STFT}(\hat{s}^0_k)$
    % \State $\hat{S}_k \gets \text{\small{STFT}}(D_\theta(s^{\tau_i}_{0, k}, \sigma_i))$ \Comment{\small{diffusion denoiser}}
\EndFor
\State $c_i\gets \frac{\hat{s}^0_{1:K} - s^{\tau_i}_{1:K}}{\sigma^2_i}$ \Comment{\small{\textcolor{magenta}{get prior score}}}
\State $X_{1:C}\gets \text{\small{STFT}}(x_{1:C})$
\For{$k=1$ \textbf{to} $K$, $c=1$ \textbf{to} $C$}
\If{$i\leq N_{\text{fg}}$}
\State $\hat{G}^{k}_{0\rightarrow c}\gets \text{\small{FCP}}(X_{c}, \hat{S}^0_k)$ \Comment{\small{\textcolor{magenta}{Est. relative RIR}}}
\Else
\State $\hat{G}^{k}_{0\rightarrow c}\gets \hat{G}^{k,\text{IVA}}_{1\rightarrow c}$ \Comment\small{{\textcolor{magenta}{Use IVA init. relative RIR}}}
\EndIf
% \State $\hat{S}_{k,c}(l,f)\gets \sum_{j=-F}^{P} \hat{G}^{k}_{0,c}(j, f) \hat{S}_{k}(l-j, f)$
\State $\hat{s}_{k,c}\gets \text{\small{ISTFT}}(\hat{G}^{k}_{0\rightarrow c} *_l \hat{S}^0_{k})$ \Comment{\textcolor{magenta}{filter to all channels}}
\EndFor
\State $g_i \gets -\xi_1(\tau_i)\nabla_{s^\tau_{1:K}}\sum\limits_{c=1}^{C}\left\|x_c-\sum_{k=1}^{K}\hat{s}_{k,c}\right\|_2^2$
\If{$i\leq N_{\text{ref}}$} \Comment{\textcolor{magenta}{reference channel guidance}}
\State $r_i \gets -\xi_2(\tau_i)\nabla_{s^\tau_{1:K}}\left\|x_1-\sum_{k=1}^{K}\hat{s}^0_{k}\right\|_2^2$
\State $g_i \gets g_i + \lambda r_i $ \Comment\small{\textcolor{magenta}{update posterior score}}
\EndIf
% \State $c_i\gets c_i + g_i$
\State \Return $c_i + g_i$ \Comment{\small{\textcolor{magenta}{final posterior score}}}
\EndFunction
% \If{$i\leq N_{ref}$}
% \State $g_i \gets g_i - \lambda \xi(\tau_i)\nabla_{s^\tau_{0,1:K}}\sum\limits_{c=1}^{C}\left\|x_c-\sum_{k=1}^{K}\hat{s}_{0, k}\right\|_2^2$
% \EndIf
\end{algorithmic}
\end{algorithm}

\vspace{-1pt}
{\bf Posterior Score Approximation Algorithm:}
The posterior score approximation in Algorithm~\ref{algo:score_cond} follows the result in Sec.~\ref{sec:score_cond} and provides Algorithm~\ref{alg:ubliss} with the posterior score for diffusion sampling.
Algorithm~\ref{algo:score_cond} needs a few pre-defined parameters. 
$D_\theta$ is pre-trained diffusion denoiser. 
$N_{\text{ref}}$ is the number of early steps needed for reference channel guidance, where the sum of estimated sources are directly guided by the reference channel mixture (lines 17-19). 
We found this empirical term helps improve algorithm robustness and convergence. 
$N_{\text{fg}}$ means in the first $N_{\text{fg}}$ sampling steps, the IVA initialized relative RIRs are used to calculate the likelihood, instead of the FCP-estimated ones. 
$\xi_1(\tau), \xi_2(\tau), \lambda$ are all weighting scalars which will be discussed in detail in Appendix~\ref{app:algorithms}. 
The algorithm takes the current diffusion noisy sources $s^{\tau_i}_{1:K}$, the mixtures $x_{1:C}$, the IVA initialized relative RIRs $\hat{G}_{1\rightarrow c}^{k, \text{IVA}}$ and the diffusion step $i$ as inputs. 
% In lines 2-6, the diffusion denoiser denoises the sources and then uses them to estimate the prior score (using Tweedie's Formula).
In lines 2-6, the noisy sources are denoised, and are then used to estimate the prior score (using Tweedie's Formula).
Then in lines 9-15, the relative RIRs (estimated from IVA init. or estimated using current source est.) are used to transform estimated virtual sources to all channels. 
In line 16 the approximated likelihood score is estimated. 
Line 17-19 is the reference channel guidance mentioned. 
Then in line 21, the posterior score is the sum of the prior score and likelihood score.
\begin{table*}[h]
\vspace{-10pt}
\centering
\caption{Evaluation results for 3-channel SMS-WSJ. Methods denoted with $\dagger$ are results from UNSSOR ~\cite{unssor}, and methods denoted with $*$ mean it is impractical. Note that SMS-WSJ only contains samples with a fixed microphone array. Top results are color-coded as \colorbox{top1}{top1}, \colorbox{top2}{top2}, and \colorbox{top3}{top3}.}
\setlength\tabcolsep{4.6pt}
\renewcommand{\arraystretch}{0.7}  % Reduced line spacing
\label{tab:smswsj}
\small
\begin{tabular}{@{}cccccccccccc@{}} % Added a column for the row numbers
\toprule
 \multirow{2}{*}{Row} & \multirow{2}{*}{Methods} & \multirow{2}{*}{Unsup.} & \multirow{2}{*}{\shortstack{Array \\[-3pt] Agnostic}} & \multirow{2}{*}{Prior} & \multirow{2}{*}{\shortstack{IVA \\ Init.}} & \multirow{2}{*}{\shortstack{Ref. \\ Guide.}} & \multirow{2}{*}{\shortstack{SDR \\[-1pt] (dB)}} & \multirow{2}{*}{\shortstack{SI-SDR \\[-1pt] (dB)}} & \multirow{2}{*}{PESQ} & \multirow{2}{*}{eSTOI} & \\
 & & & & & & & & & & & \\
\midrule
\rowcolor{gray!10}
0 & Mixture  & - & - & - & - & - & 0.1 & 0.0 & 1.87 & 0.603 & \\
\midrule
1a & Spatial Cluster$^\dagger$ & \checkmark & \checkmark & - & - & - & 9.5 & 8.5 & 2.52 & 0.759 & \\
\rowcolor{gray!10}
1b & IVA$^\dagger$ & \checkmark & \checkmark & Laplace & - & - & 12.0 & 10.7 & 2.67 & 0.802 & \\
1c & IVA & \checkmark & \checkmark & Gaussian & - & - & 13.4 & 12.2 & 2.82 & 0.834 & \\
\rowcolor{gray!10}
1d & UNSSOR$^\dagger$ & \checkmark & \texttimes & - & - & - & 15.4 & 14.4 & 3.20 & 0.875 & \\
\midrule
% 2a & \multirow{5}{*}{\shortstack{UBliSS-Diff \\ (A,B,C,D,E)}} & \checkmark & \checkmark & Anechoic & \checkmark & \checkmark & 15.7$\pm$1.2 & 14.9$\pm$1.3 & 3.36$\pm$0.12 & 0.863$\pm$0.020 & \\
2a & \method-A & \checkmark & \checkmark & Anechoic & \checkmark & \checkmark & 15.8$\pm$1.2 & 15.0$\pm$1.3 & 3.38$\pm$0.12 & 0.865$\pm$0.020 & \\
\rowcolor{gray!10}
2b & \method-B & \checkmark & \checkmark & Reverb. & \checkmark & \checkmark & 15.1$\pm$1.1 & 14.3$\pm$1.2 & 3.29$\pm$0.11 & 0.850$\pm$0.019 & \\
2c & \method-C & \checkmark & \checkmark & Anechoic & \checkmark & \texttimes & 14.9$\pm$1.3 & 14.0$\pm$1.5 & 3.16$\pm$0.16 & 0.844$\pm$0.026 & \\
\rowcolor{gray!10}
2d & \method-D & \checkmark & \checkmark & Anechoic & \texttimes & \checkmark & 8.5$\pm$4.6 & 6.8$\pm$5.2 & 2.59$\pm$0.48 & 0.731$\pm$0.111 & \\
% 2e & & \checkmark & \checkmark & Anechoic & \texttimes & \texttimes & 1.3$\pm$2.0 & -0.7$\pm$2.3 & 1.79$\pm$0.18 & 0.511$\pm$0.059 & \\
2e & \method-E & \checkmark & \checkmark & Anechoic & \texttimes & \texttimes & 0.9$\pm$1.6 & -1.4$\pm$1.9 & 1.74$\pm$0.15 & 0.518$\pm$0.058 & \\
\midrule
% 3a & \multirow{2}{*}{\shortstack{UBliSS-Diff-Max \\ (A,B)}} & \checkmark & \checkmark & Anechoic & \checkmark & \checkmark & 17.1 & 16.4 & 3.50 & 0.886 & \\
\rowcolor{gray!10}
% 3a & \method-A-Max & \checkmark & \checkmark & Anechoic & \checkmark & \checkmark & \cellcolor{top1}{17.2} & \cellcolor{top1}{16.5} & \cellcolor{top3}{3.52} & \cellcolor{top3}{0.888} & \\
3a & \method-A-Max$^*$ & \checkmark & \checkmark & Anechoic & \checkmark & \checkmark & {17.2} & {16.5} & {3.52} & {0.888} & \\
3b & \method-D-Max$^*$ & \checkmark & \checkmark & Anechoic & \texttimes & \checkmark & 14.4 & 13.4& 3.19 & 0.860 & \\
\midrule
% 4a & \multirow{2}{*}{\shortstack{UBliSS-Diff-ML \\ (A,B)}} & \checkmark & \checkmark & Anechoic & \checkmark & \checkmark & 16.8 & 16.0 & 3.47 & 0.882 & \\
\rowcolor{gray!10}
4a & \method-A-ML & \checkmark & \checkmark & Anechoic & \checkmark & \checkmark & \cellcolor{top1}{16.9} & \cellcolor{top2}{16.2} & \cellcolor{top3}{3.49} & \cellcolor{top3}{0.884} & \\
4b & \method-D-ML & \checkmark & \checkmark & Anechoic & \texttimes & \checkmark & 14.0 & 12.9 & 3.15 & 0.851 & \\
\midrule
\rowcolor{gray!10}
5a & TF-GridNet-SMS$^\dagger$ & \texttimes & \texttimes & - & - & - &\cellcolor{top2}{16.8} & \cellcolor{top1}{16.3} & \cellcolor{top1}{3.91} & \cellcolor{top1}{0.924} & \\
5b & TF-GridNet-SMS & \texttimes & \texttimes & - & - & - & \cellcolor{top3}{16.2} & \cellcolor{top3}{15.7} & \cellcolor{top2}{3.72} & \cellcolor{top2}{0.908} & \\
\rowcolor{gray!10}
5c & TF-GridNet-Spatial & \texttimes & (3-mics) & - & - & - & 14.7 & 14.1 & 3.35 & 0.877 & \\
\bottomrule
\end{tabular}
\vspace{-13pt}
\end{table*}
\vspace{-8pt}
\section{Experiments and Evaluation}
\vspace{-3pt}
We first discuss {\method}'s training and sampling configurations, as well as the baseline configurations.
Then we show results and analysis on both \textbf{fixed} microphone array (test samples are all from one single microphone array) and \textbf{ad-hoc} microphone arrays (test samples are from different unknown microphone arrays).
% We have open sourced {\method}\footnote{ArrayDPS code in~\href{https://github.com/ArrayDPS/ArrayDPS}{https://github.com/ArrayDPS/ArrayDPS}}.
We have open sourced {\method} in~\href{https://github.com/ArrayDPS/ArrayDPS}{https://github.com/ArrayDPS/ArrayDPS}.
\vspace{-5pt}
\subsection{Datasets}\label{sec:datasets}
\vspace{-2pt}
As mentioned in Sec.~\ref{sec:method}, {\method} relies on a pre-trained single-channel single-speaker denoising diffusion model $D_\theta$. 
We train this unconditional speech diffusion model on a clean subset of speech corpus LibriTTS~\cite{libritts}. 
Since the virtual source could be reverberant, we train another diffusion prior to reverberant speech, where we convolve LibriTTS clean speech with room impulse responses (RIR). 
We discuss details, including the architecture and diffusion training configurations, in Appendix~\ref{app:diffusion}. 
For the final sampler and the posterior score approximation, we discuss the parameter configurations in Appendix~\ref{app:algorithm_param}.

% For microphone recordings, we first evaluate on the SMS-WSJ~\cite{smswsj} dataset, which contains simulated mixtures and sources on a fixed circular microphone array. 
% This is a common dataset/benchmark used for multi-channel speech separation~\cite{unssor, eras, smswsjuse, mix2mix, tfcrossnet}; details of the dataset are reported in Appendix~\ref{app:smswsj}. 
% We use the first, the third, and the fifth microphone for 3-channel speech separation, as in UNSSOR~\cite{unssor}.
For evaluation, we use SMS-WSJ~\cite{smswsj} dataset for fixed microphone array evaluation
% This is a common dataset/benchmark used for multi-channel speech separation~\cite{unssor, smswsjuse, mix2mix, tfcrossnet}.
and use Spatialized WSJ0-2Mix dataset~\cite{mcdeepcluster} for ad-hoc microphone array evaluation. SMS-WSJ contains simulated mixtures and sources on a fixed microphone array while Spatialized WSJ0-2Mix contains samples on different microphone arrays. All datasets are designed for 2-speaker source separation. Details of the datasets are explained in Appendix~\ref{app:dataset}.

\vspace{-5pt}
\subsection{\method~and Variants (for ablations)}\label{sec:arraydps_configs}
\vspace{-2pt}
Row 2a--2e, 3a--3b, and 4a--4b in Table~\ref{tab:smswsj} show \method~and its variants for ablations.
\textbf{\method-A} in row \textbf{2a} is default {\method} with anechoic clean speech diffusion prior. 
\textbf{\method-B} in row~\textbf{2b} is the same but the diffusion prior is trained on reverberant speech. 
\textbf{\method-C} in row \textbf{2c} shows the ablation when the reference channel guidance is removed from Algorithm~\ref{algo:score_cond}.
\textbf{\method-D} in row \textbf{2d} shows the ablation when IVA initialization is removed, while \textbf{\method-E} in row~\textbf{2e} shows the ablation when both initialization and reference channel guidance are removed. 
Note that the sampling parameters for \method~with and without initialization are different (see Appendix~\ref{app:algorithm_param}). 
Because all \method~in Row 2(a--e) are generative, we sample 5 separated samples for each mixture and report the 5-sample mean and standard deviation (mean$\pm$std in Row 2, Table~\ref{tab:smswsj}), averaged over all the mixtures in the test set.

In Row \textbf{3a--3b}, \textbf{\method-A-Max} and \textbf{\method-D-Max} shows the max metric score from the 5 samples, averaged for all test samples. 
Row~\textbf{4a-4b} represents \textbf{\method-A-ML} and \textbf{\method-D-ML}, where ML stands for `maximum likelihood'. 
Basically, from the 5 samples, we find the one with the maximum likelihood (highest mixture reconstruction SNR) and report the separation metric averaged over the test set. 
Row 3a and 4a's methods are the same as the default 2a (default), while 3b and 4b are the same as 2d (no initialization). 
Note that Row 4 is a practical algorithm because it only uses the mixture to calculate the likelihood.
% \vspace{-5pt}
\subsection{Baseline Methods\label{sec:baselines}}
% \vspace{-1pt}
For the baseline models, we include three unsupervised models: \textbf{Spatial Clustering}~\cite{wdo, spatialcluster1, spatialcluster2, spatialcluster3}, \textbf{Independent Vector Analysis (IVA)}~\cite{iva1, iva2}, \textbf{UNSSOR}~\cite{unssor}, as well as supervised TF-GridNet~\cite{tfgridnet}.
\textbf{TF-GridNet-SMS$^\dagger$} is the supervised baseline model reported in UNSSOR~\cite{unssor} trained on SMS-WSJ Corpus~\cite{smswsj}. \textbf{TF-GridNet-SMS} and \textbf{TF-GridNet-Spatial} are our reproduced TF-GridNet trained on SMS-WSJ and Spatialized WSJ0-2Mix corpus respectively. Note that TF-GridNet-Spatial in Table~\ref{tab:smswsj} row \textbf{5c} should generalize to any 3-channel ad-hoc microphone arrays.
% (note that TF-GridNet-Spatial is still trained on fixed number of microphones, so it is array agnostic when the number of microphone is fixed.). 
All baseline models are discussed in detail in Appendix~\ref{app:baseline}.

\vspace{-5pt}
\subsection{Metrics\label{sec:metrics}}
\vspace{-1pt}
For metrics, we use the first channel clean source images as the reference signals. 
We use the signal-to-distortion ratio (SDR)~\cite{sdr} and Scale-Invariant SDR (SI-SDR)~\cite{sisdr} to measure sample-level separation performance. 
We use perceptual evaluation of speech quality (PESQ)~\cite{pesq} and extended short-time objective intelligibility (eSTOI)~\cite{stoi} to measure perceptual quality and speech intelligibility respectively.
\vspace{-5pt}
\subsection{Fixed Microphone Array Evaluation}\label{sec:arrayspecific}
\vspace{-1pt}
We show the evaluation result for the 3-channel (using first, third, and fifth mics) SMS-WSJ test set in Table~\ref{tab:smswsj}. 6-channel SMS-WSJ results are in Appendix~\ref{app:result_smswsj}.

\textbf{Against Unsupervised:} 
% $\langle$$\rangle$
We first compare against the unsupervised methods.
Comparing rows \textbf{2a vs. 1(a--d)}, we see \method-A's mean score consistently outperforms spatial clustering and IVA-based methods by a substantive margin; 
also outperforms the unsupervised state-of-the-art method UNSSOR in all metrics except eSTOI. 
Moreover, {\method-A}'s standard deviation is quite robust. 
Note that UNSSOR is trained only for a fixed array on the SMS-WSJ dataset. 
% By observing our method's standard deviation, the performance is quite robust. 
If we further compare row \textbf{4a vs. 1d}, we see that by selecting the maximum likelihood sample, \method-A-ML shows strong improvement over UNSSOR (1.8 dB of SI-SDR, 0.29 PESQ).

\textbf{Against Supervised:} 
% Second, we compare our method with supervised method. 
Outperforming recent supervised methods is extremely challenging and all previous unsupervised methods exhibit a gap to supervised methods~\cite{unssor, eras}. 
On comparing row \textbf{2a vs. 5a--5b}, \method-A is consistently worse than the recent supervised methods, as expected.
However, comparing row \textbf{3a vs. 5a--5b}, \method-A-Max shows better SDR and SI-SDR than the supervised method, which means if we sample 5 samples using \method, one of the samples can outperform supervised methods in terms of SDR and SI-SDR.
However, since \method-Max's max operation is not practical, we check \method-A-ML, i.e., row \textbf{4a  vs. 5a--5b}; observe that the maximum likelihood sample from 5 samples achieves slightly better SDR than the supervised method, but a bit worse on other metrics. 
% Note that the multi-channel version of TF-GridNet is a recent carefully designed state of the art architecture, which our method cannot take advantage of.  (UNSSOR report supervised performance of other architectures like DPRNN~\cite{dprnn} and TCN-DenseUNet~\cite{tcn}, which our method can easily outperform).  

\textbf{Ablation Studies:}
We compare the ablations in rows 2a-2e. 
Comparing \textbf{2b vs. 2a}, we find that it is better to use the anechoic clean speech diffusion prior instead of the reverberant one. 
% One explanation is that it is easier for FCP to learn the actual RIR (transformation from anechoic to reverberant) instead of the relative RIR (transformation from reverberant to reverberant), which is further explained in Appendix~\ref{app:fcp}. 
The possible reason is related to FCP, which we explain in Appendix~\ref{app:fcp}. 
Comparing \textbf{2c vs. 2a}, we find that with initialization, the reference channel guidance can improve all metrics and reduce the standard deviation (instability). 
Comparing \textbf{2d vs. 2a}, the performance drops severely without the IVA initialization, while the std is also high, meaning sometimes the method can separate but not always. 
When it's able to separate, from row \textbf{3b} and \textbf{4b}, the best result from 5 samples is reasonable even without IVA initialization.
Lastly, we compare row \textbf{2d vs. 2e} and find that without IVA initialization, the reference channel guidance is extremely important to obtain reasonable performance.

Lastly, we check row \textbf{5c}, which is supervised TF-GridNet trained on 3-channel Spatialized WSJ0-2Mix dataset (ad-hoc array dataset)
% The dataset contains randomized microphone array geomrtries and will be explained in Appendix~\ref{app:spatialwsj}.
% In row 5c, TF-GridNet-Spatial \hl{should work for any 3-channel settings}, so we test its generalization performance on SMS-WSJ dataset. 
and we test how well it works on the SMS-WSJ dataset. Observing row \textbf{5b vs. 5c}, we see that for supervised models, generalizing to ad-hoc arrays costs a huge performance drop. Comparing row \textbf{5b vs. 2a}, \method-A outperforms supervised TF-GridNet-Spatial in SDR and SI-SDR by about 1dB, while \method-A-ML (row \textbf{4a}) outperforms TF-GridNet-Spatial for all metrics.
\begin{table}[t]
\vspace{-8pt}
\caption{Ad-hoc Array Evaluation results for 4-channel Spatialized WSJ0-2Mix. Note that the microphone positions are random for this dataset. Top results are emphasized in \colorbox{top1}{top1}, \colorbox{top2}{top2}, and \colorbox{top3}{top3}. Methods denoted with $*$ mean it is impractical.}
\vspace{-8pt}
\setlength\tabcolsep{4.5pt}
\renewcommand{\arraystretch}{0.8}  % Adjust line spacing for clarity
\label{tab:spatialwsj}
\vskip 0.15in
\begin{center}
\begin{small}
\begin{tabular}{@{}cccccc@{}}
\toprule
Row &Methods & SDR & SI-SDR & PESQ & eSTOI \\
\midrule
% \rowcolor{gray!10}
0 & Mixture    & 0.2 & 0.0 & 1.81 & 0.545 \\
\midrule
1a&Spatial Cluster    & 9.3 & 8.0 & 2.48 & 0.745\\
% \rowcolor{gray!10}
1b&IVA-Laplace & 7.9 & 5.2 & 2.41 & 0.648 \\
1c&IVA-Gaussian    & 12.5 & 10.1 & 3.01 & 0.808 \\
% \rowcolor{gray!10}
1d&UNSSOR    & 15.2 &14.2 &\cellcolor{top2}{3.54} & 0.873 \\
\midrule
\multirow{2}{*}{2a}&\multirow{2}{*}{\method-A}      & \cellcolor{top3}{\multirow{2}{*}{\shortstack{16.1\\$\pm$0.6}}} & \cellcolor{top3}{\multirow{2}{*}{\shortstack{15.3\\$\pm$0.6}}} & \multirow{2}{*}{\shortstack{3.47\\$\pm$0.07}} & \multirow{2}{*}{\shortstack{0.877\\$\pm$0.012}} \\
&& & & & \\
\midrule
\multirow{2}{*}{3a}&\multirow{2}{*}{\method-D}      & \multirow{2}{*}{\shortstack{6.7\\$\pm$4.5}} & \multirow{2}{*}{\shortstack{4.7\\$\pm$5.0}} & \multirow{2}{*}{\shortstack{2.45\\$\pm$0.51}} & \multirow{2}{*}{\shortstack{0.677\\$\pm$0.120}} \\
&& & & & \\
\midrule
% \rowcolor{gray!10}
4a&\method-A-Max*&\cellcolor{top1}{16.8}& \cellcolor{top1}{16.0} & \cellcolor{top2}{3.54} & \cellcolor{top2}{0.891}\\
4b&\method-D-Max*& 12.9 & 11.6 & 3.14 & 0.830 \\
\midrule
% \rowcolor{gray!10}
5a&\method-A-ML      & \cellcolor{top2}{16.6} & \cellcolor{top2}{15.8} & {3.52} & \cellcolor{top3}{0.886} \\
5b&\method-D-ML& 12.3 & 10.8 & 3.06 & 0.810 \\
\midrule
% \rowcolor{gray!10}
6a&TF-GridNet-Spatial & 15.8 & 15.1 & \cellcolor{top1}{3.67} & \cellcolor{top1}{0.895} \\
\bottomrule
\end{tabular}
\end{small}
\end{center}
\vskip -10pt
\end{table}
\vspace{-6pt}
\subsection{Ad-hoc Array Evaluation}\label{sec:arrayagnostic}
\vspace{-1pt}
% \vspace{-10pt}
We evaluate the ad-hoc array setting on the Spatialized WSJ0-2Mix~\cite{mcdeepcluster} dataset. In Table~\ref{tab:spatialwsj}, we show results for the 4-channel case (using the first 4 mics for each sample, different samples' array geometries are different); the 2-channel and 3-channel results are in Appendix~\ref{app:result_spatial}.

Following Table~\ref{tab:spatialwsj}, {\method}-A-ML (row \textbf{5a}) outperforms all {\em unsupervised} methods by a substantive margin, while {\method}-A (row \textbf{2a})'s mean metrics performs better than UNSSOR (row \textbf{1d}) in SDR, SI-SDR, and eSTOI. Comparing row \textbf{2a vs. 6a}, we see {\method-A}'s mean metric scores are even better than {\em supervised} TF-GridNet in SDR and SI-SDR, showing \method's superiority in ad-hoc array setting.
\vspace{-8pt}

\subsection{3 speakers and Real-World Evaluations}
We also show that our \method works nicely in 3-speaker and real-world samples. For 3-speaker evaluation, we evaluate \method on the Spatialized WSJ0-3Mix dataset, which is the 3-speaker version of the Spatialized WSJ0-2Mix dataset. This means the dataset also contains samples recorded from ad-hoc microphones. The separation results are shown in Appendix~\ref{app:3speakers}, showing that ArrayDPS easily outperforms supervised methods by a large margin.

For real-world mixture evaluation, we recorded 15 mixtures in a 7m x 4m x 2.7m room. Two volunteers speak simultaneously with weak environmental noise and are recorded by 3 microphones. The separation results are shown in the demo site: \href{https://arraydps.github.io/ArrayDPSDemo/}{https://arraydps.github.io/ArrayDPSDemo/}.
\vspace{-8pt}
\subsection{Source and Filter Visualizations}
Recall that \method samples virtual source signals and then uses FCP to estimate the relative RIRs that filter the virtual sources to the reference channel sources. Thus, we conduct a visualization analysis of the virtual sources, and the estimated relative RIRs are in the spectrogram domain. The visualization and a detailed explanation are in Appendix~\ref{app:visualization}. The virtual source listening demos are also on our demo site. Our observation is that the virtual source separated is closer to the anechoic clean source speech than the reverberant clean source speech, meaning that \method has some dereverberation effects.
\vspace{-5pt}
\section{Related Work}
\vspace{-1pt}
% mixit~\cite{mixit, mixit2, mixit3, mixit4, mixit5, mixit6}

% mcmixit~\cite{mcmixit}

% arrayagnostic~\cite{arrayagnostic1, arrayagnostic2}

% ica~\cite{ica1, ica2, ica3}

% iva~\cite{iva1, iva2, iva_iss}

% % plabel~\cite{plabel1, plabel2, plabel3, plabel4, plabel5}

% unsup~\cite{unsup1, neuralfca, neuralfca2}

% % cocktail party problem, classical algorithms (ICA, IVA)
% Blind speech separation has been advanced greatly by deep neural networks using supervised training~\cite{wang_supervised_2018}. To solve the source label permutation issue, two types of work are deep clustering~\cite{deepcluster} and permutation invariant training~\cite{pit}. However, supervised methods have generalization issues~\cite{generalization1, generalization2}, especially in multi-channel scenarios, where training on multiple microphone with array agnostic tricks~\cite{arrayagnostic1, arrayagnostic2, arrayagnostic3} cannot guarantee generalization on unseen microphone arrays~\cite{arrayagnostic3}.
Blind speech separation has been advanced greatly by deep neural networks using supervised training~\cite{wang_supervised_2018, deepcluster, pit, tfcrossnet, tfgridnet, spatialnet, mimocsm, fovnet, tabe}. However, supervised methods have generalization issues~\cite{generalization1, generalization2}, especially in multi-channel scenarios, where training on multiple microphones with array agnostic tricks~\cite{arrayagnostic1, arrayagnostic2, arrayagnostic3} cannot guarantee generalization on unseen microphone arrays~\cite{arrayagnostic3}. Also, supervised methods are usually discriminative, often producing blurred results with nonlinear distortions. To solve this problem, generative separation methods~\cite{gan_sep, diff_sep, tornike, tornike2, sep_diff} are proposed to sample clean sources conditioned on the given mixture. However, these generative methods still need clean-mixture paired data for supervised training.

For unsupervised separation, independent component analysis (ICA)~\cite{ica1, ica2, ica3} separates sources by enforcing independence between sources. Independent vector analysis (IVA)~\cite{iva1, iva2, iva_iss} adds a Gaussian assumption on each source's STFT bins. Spatial Clustering~\cite{spatialcluster1, spatialcluster2, spatialcluster3, wdo} based methods separate the sources by clustering the STFT bins using spatial features.

Deep learning based unsupervised source separation has also been studied widely. Mixture invariant training (MixIT)~\cite{mixit, mixit2, mixit3, mixit4, mixit5, mixit6} synthesizes new mixtures by mixing real-world mixtures, and train the separation model output sources can be mixed somehow to the original mixtures. MixIT has been modified to the multi-channel scenarios~\cite{unssor, mcmixit}, but shows limited performance. Another line of work~\cite{plabel1, plabel2, plabel3, plabel4, plabel5} uses classic methods like spatial clustering to generate pseudo-labels as training targets. However, the performance is bounded by the classic method. More recently, one group of work tried to exploit the multi-channel signal model for unsupervised training~\cite{ras, eras,unssor,unsup1, neuralfca, neuralfca2}. These methods train a neural network to separate sources, which after filtering and mixing, would be close to the multi-channel mixture.
% Neural-FCA~\cite{neuralfca} trains this in a mixture VAE framework to make sure the sources are independent, while UNSSOR~\cite{unssor} nicely designs a mixture-constraint loss and an intra-source magnitude scattering loss. 
Although these methods show promising results, they do not exploit the speech prior information during training.

One way to incorporate prior is to use a diffusion model~\cite{ddpm, edm, score, sohl2015deep}. Score-based diffusion model learns to generate from a distribution by following a probabilistic SDE/ODE from a noise initialization~\cite{score}.
% At each diffusion step, it estimates the score function, and then update a small step towards the score direction, which will eventually converges to a data sample.
Diffusion posterior sampling (DPS)~\cite{DPS, pigdm, score} tries to solve the inverse problem with a pre-trained prior diffusion model
% During sampling, DPS tries to estimate the posterior score with original diffusion model
and a known likelihood model. Later, a few works are further proposed to solve the blind inverse problem with DPS~\cite{blinddps, buddy, bai2024blindinversionusinglatent, vsvento2025estimation, eloi1, eloi2, eloi3, fastdiffusionem}.

More recently, DPS has been applied for source separation~\cite{dpssep1, msdm, ssdgp}. However, these methods only consider single-channel source separation where the likelihood is tractable, which is not the case for multi-channel problems.
\vspace{-8pt}
\section{Conclusion}
\vspace{-1pt}
% This paper proposes {\method}, an unsupervised, array-agnostic, generative method to separate speech from multi-microphone mixture recordings. {\method} proposes to use a speech diffusion prior and a novel likelihood approximation to enable posterior sampling, which facilitates separation. The result shows our method outperforms other unsupervised methods and 
% performs on par with supervised methods, especially in ad-hoc microphone array settings.
This paper proposes {\method}, an unsupervised, array-agnostic, and generative method to separate speech from multi-channel mixture recordings. {\method} proposes to use a speech diffusion prior and a novel likelihood approximation to enable posterior sampling. The result shows our method outperforms other unsupervised methods and 
performs on par with supervised methods in SDR. We leave future work to explore \method~for more general array inverse problems.
\section*{Acknowledgment}
We thank Foxconn and NSF (grant 2008338, 1909568, 2148583,
and MRI-2018966) for funding this research. We are also grateful to the reviewers for their insightful feedback.

\section*{Impact Statement}
This paper explores advancements in unsupervised, array-agnostic, and generative blind speech separation, which lies in the intersection of several established fields, including array signal processing and machine learning. The algorithm we developed has many positive societal consequences. One notable application of our algorithm is assistive/augmented hearing, which improves communication and accessibility for individuals in noisy environments, especially those with hearing impairments. Also, our method has the potential to improve automatic speech transcription technology on ad-hoc microphone arrays. Although our method is generative, it does not synthesize harmful speeches that are not initially presented in the speech mixture. Nonetheless, the capability to isolate individual voices may pose privacy risks, which need careful regulations. We believe no other concerns require specific emphasis at this point.

\bibliography{main}
\bibliographystyle{icml2025}

%%%%%%%%%%%%%%%%%%%%%%%%%%%%%%%%%%%%%%%%%%%%%%%%%%%%%%%%%%%%%%%%%%%%%%%%%%%%%%%
%%%%%%%%%%%%%%%%%%%%%%%%%%%%%%%%%%%%%%%%%%%%%%%%%%%%%%%%%%%%%%%%%%%%%%%%%%%%%%%
% APPENDIX
%%%%%%%%%%%%%%%%%%%%%%%%%%%%%%%%%%%%%%%%%%%%%%%%%%%%%%%%%%%%%%%%%%%%%%%%%%%%%%%
%%%%%%%%%%%%%%%%%%%%%%%%%%%%%%%%%%%%%%%%%%%%%%%%%%%%%%%%%%%%%%%%%%%%%%%%%%%%%%%
\newpage
\appendix
\onecolumn
% \section{You \emph{can} have an appendix here.}
We organize the appendix as follows:
\begin{itemize}
    \item Appendix~\ref{app:diff} provides a background on score-based diffusion and diffusion posterior sampling (DPS).
    \item  Appendix~\ref{app:posterior_score} provides proposed posterior score derivation, analysis, and relation to the classic EM algorithm.
    \item Appendix~\ref{app:algorithms} presents Algorithm~\ref{alg:ubliss} and Algorithm~\ref{algo:score_cond} in detail, and provides algorithm configurations.
    \item Appendix~\ref{app:diffusion} explains diffusion prior training configuration and model architecture.
    \item Appendix~\ref{app:baseline} discusses all the baseline models in detail.
    \item Appendix~\ref{app:dataset} illustrates the SMS-WSJ and Spatialized WSJ0-2Mix datasets for evaluation.
    \item Appendix~\ref{app:fcp} shows ablation experiments for FCP method and explains why we use the virtual source signal model.
    \item Appendix~\ref{app:result_spatial} reports the result of 2-channel and 3-channel Spatialized WSJ0-2Mix dataset.
    \item Appendix~\ref{app:result_smswsj} reports the result of 6-channel SMS-WSJ dataset.
    \item Appendix \ref{app:3speakers} reports the 3-speaker separation results on the 4-channel Spatialized WSJ0-3Mix dataset.
    \item Appendix \ref{app:ablations} shows ArrayDPS's sensitivity and robustness to hyperparameters.
    \item Appendix \ref{app:visualization} shows the visualization of ArrayDPS separated virtual sources, FCP estimated RIRs, and ArrayDPS separated reference channel sources.
\end{itemize}

\section{Score-based Diffusion and Diffusion Posterior Sampling}\label{app:diff}
{\bf Score Diffusion Model:}
Diffusion-based generative models aim to sample from a data distribution $p_{\text{data}}(s)$, by starting from a noise sample and then gradually denoising it~\cite{score, edm, ddpm}. Score-based diffusion model~\cite{score, edm} can also formulate the diffusion denoising process with a probabilistic flow ODE.  Following EDM~\cite{edm}, with the specific diffusion noise schedule $\sigma(\tau)=\tau$, where $\tau$ is the diffusion time, then the probabilistic flow ODE is defined by:
{\small
\begin{equation}\label{eq:ode}
d{s}^\tau = -\sigma(\tau) \nabla_{s^\tau} \log p(s^\tau) d\tau    
\end{equation}
}
where $p(s^\tau) =  \mathcal{N}(s, \sigma^2(\tau)I)$. This probabilistic flow ODE allows transformation from $s^{\tau_{\text{max}}}$ (noise) to $s^{\tau_{\text{min}}}$ (data), using an ODE solver like Euler's method. During training, a denoiser $D_\theta(s^\tau, \sigma(\tau))$ is learned to denoise $s^\tau$ with the following denoising loss:
% \begin{equation}\label{eq:denoise_diff}
% \mathbb{E}_{\tau \sim \mathcal{U}(\tau_0,\tau_{\text{max}}) }\mathbb{E}_{s \sim p_{\text{data}}}\mathbb{E}_{n \sim \mathcal{N}(0, \sigma^2(\tau) I)} \left\| D_\theta(s + n; \sigma) - s \right\|^2
% \end{equation}
{\small
\begin{equation}\label{eq:denoise_diff}
\mathbb{E}_{\tau}\mathbb{E}_{s \sim p_{\text{data}}}\mathbb{E}_{n \sim \mathcal{N}(0, \sigma^2(\tau) I)} \left\| D_\theta(s + n,~\sigma(\tau)) - s \right\|^2
\end{equation}
}
After training, the score function $\nabla_{{s}^\tau} \log p({s}^\tau)$ is approximated by $S_{\theta}(s^\tau, \sigma(\tau)) =: \frac{D_\theta(s^\tau, \sigma(\tau)) - s^\tau}{\sigma^2(\tau)}$, using the Tweedie's Formula. During sampling, $s^{\tau_{\text{max}}}$ is first sampled from $\mathcal{N}(0, \sigma^2(\tau_{\text{max}})I)$, and then an ODE solver is used to integrate through~Eq.\ref{eq:ode} from $s^{\tau_{\text{max}}}$ to a data sample $s^{\tau_{\text{min}}}\sim p_{\text{data}}$. 

{\bf Diffusion Posterior Sampling:} This diffusion model further allows a universal framework to solve the general inverse problem, which is known for diffusion posterior sampling~\cite{DPS, pigdm}. Consider a clean signal $s$ (e.g., image or speech) is distorted by a distortion function $A(\cdot)$, which results in a distorted signal $x = A(s) + \epsilon_n$. To recover the original clean signal $s$, diffusion posterior sampling proposes to sample from $p(s|x)$ by using the probabilistic flow ODE similar to Eq.~\ref{eq:ode} except using the posterior score: %initiated from $s^{\tau_{max}}\sim \mathcal{N}(0, \sigma^2(\tau_{max})I)$:
{\small
\begin{equation}\label{eq:dpsode}
d{s}^\tau = -\sigma(\tau) \nabla_{s^\tau} \log p(s^\tau | x) d\tau    
\end{equation}
}
To approximate the posterior score $\nabla_{s^\tau} \log p(s^\tau | x)$, it's decomposed into a prior score and a likelihood score with Bayes' theorem:
{\small
\begin{equation}\label{eq:bayesian}
    \nabla_{s^\tau} \log p(s^\tau | x) = \nabla_{s^\tau} \log p(s^\tau) + \nabla_{s^\tau} \log p(x|s^\tau)
\end{equation}
}
where $\nabla_{s^\tau} \log p(s^\tau)$ is estimated by $S_\theta(s^\tau, \sigma(\tau))$, and in DPS~\cite{DPS}, the second term $\nabla_{s^\tau} \log p(x|s^\tau)$ is empirically approximated by:
{\small
\begin{align}
    &\nabla_{s^\tau} \log p(x|\hat{s}^0(s^\tau)) \text{, where }\hat{s}^0(s^\tau)=D_\theta(s^\tau,~\sigma(\tau)) \label{eq:dps_app1}\\
    &=\nabla_{s^\tau}\xi(\tau)\|x-A(D_\theta(s^\tau,~\sigma(\tau)))\|_2^2 \label{eq:dps_app2}
\end{align}
}
The $\xi(\tau)$ is a parameter to control the amount of likelihood guidance for each step, which theoretically relates to the measurement noise variance. Note that this solution assumes that the distortion function $A(\cdot)$ is known in advance.

% In this paper, we consider the case where $s$ represents the virtual channel speech sources $s_{1:K}$ and the distorted measurement $x$ represent the multi-channel mixture $x_{1:C}$,  as mentioned in Sec.~\ref{sec:problem}.

\section{Posterior Score Approximation}\label{app:posterior_score}
This section first gives a derivation of the proposed posterior score approximation as in Sec.~\ref{sec:score_cond} Eq.~\ref{eq:final_score}. Then to understand why this makes sense, we give a thorough analysis of our approximation in Sec.~\ref{app:analysis}, where we also prove FCP as a maximum likelihood estimator. Finally, in Sec.~\ref{app:em}, we connect our likelihood score approximation step with the classic expectation maximization algorithm.
\subsection{Derivation}\label{app:derivation}
This section gives a full detailed derivation of our approximation of the posterior score as proposed in Sec.~\ref{sec:score_cond}. Basically, our goal is to approximate $\nabla_{s^\tau_{1:K,1}}\log p(s^\tau_{1:K} | x_{1:C})$ that's needed for sampling. First, similar to Eq.~\ref{eq:bayesian}, we decompose $\nabla_{s^\tau_{1:K}}\log p(s^\tau_{1:K} | x_{1:C})$ with the Bayes' theorem:
% \begin{align}
%     &\nabla_{s^\tau_{1:K}} \log p(s^\tau_{1:K} | x_{1:C}) \nonumber\\
%     = &\nabla_{s^\tau_{1:K}} \log p(s^\tau_{1:K}) + \nabla_{s^\tau_{1:K}} \log p(x_{1:C}|s^\tau_{1:K})\\
%     = &\sum_{k=1}^{K}\nabla_{s^\tau_{1:K}} \log p(s^\tau_{k}) + \nabla_{s^\tau_{1:K}} \log p(x_{1:C}|s^\tau_{1:K}) \label{eq:ourbayesian}
%     % \simeq&\sum_{k=1}^{K}\nabla_{s^\tau_{1:K}} \log p(s^\tau_{k}) + \nabla_{s^\tau_{1:K}} \log p(x_{1:C}|s^\tau_{1:K}) \label{eq:ourbayesian_approx}
% \end{align}
\begin{align}
    \nabla_{s^\tau_{1:K}} \log p(s^\tau_{1:K} | x_{1:C})
    = &\nabla_{s^\tau_{1:K}} \log p(s^\tau_{1:K}) + \nabla_{s^\tau_{1:K}} \log p(x_{1:C}|s^\tau_{1:K})\\
    = &\sum_{k=1}^{K}\nabla_{s^\tau_{k}} \log p(s^\tau_{k}) + \nabla_{s^\tau_{1:K}} \log p(x_{1:C}|s^\tau_{1:K}) \label{eq:ourbayesian}
    % \simeq&\sum_{k=1}^{K}\nabla_{s^\tau_{1:K}} \log p(s^\tau_{k}) + \nabla_{s^\tau_{1:K}} \log p(x_{1:C}|s^\tau_{1:K}) \label{eq:ourbayesian_approx}
\end{align}
Now notice that in the first part of Eq.~\ref{eq:ourbayesian}, all the summands $\nabla_{s^\tau_{k}} \log p(s^\tau_{k})$ (prior score) can be approximated by the score model $S_\theta(s^\tau_{k},~\sigma(\tau))$ trained on single-speaker speech. Thus, we try to approximate the second part of Eq.~\ref{eq:ourbayesian} (likelihood score): %similar to Eq.~\ref{eq:dps_app1}:
% \begin{align}
%     &\nabla_{s^\tau_{1:K}} \log p(x_{1:C}|s^\tau_{1:K})\\
%     =&\sum_{c=1}^{C}\nabla_{s^\tau_{1:K}} \log p(x_{c}|s^\tau_{1:K}) \label{eq:likelihood1}\\
%     % =&\nabla_{s^\tau_{1:K,1}} \log p(x_{1}|s^\tau_{1:K,1}) + \sum_{c=2}^{C}\nabla_{s^\tau} \log p(x_{c}|s^\tau_{1:K,1}) \label{eq:likelihood1}\\
%     \simeq&\sum_{c=1}^{C}\nabla_{s^\tau_{1:K}} \log p(x_{c}|\hat{s}^0_{1}(s^\tau_{1}), ..., \hat{s}^0_{K}(s^\tau_{k}))\text{, where } \hat{s}^0_{k}(s^\tau_{k})=D_\theta(s^\tau_{k}, \sigma(\tau))\label{eq:likelihood2}
% \end{align}
\begin{align}
    \nabla_{s^\tau_{1:K}} \log p(x_{1:C}|s^\tau_{1:K})=&\sum_{c=1}^{C}\nabla_{s^\tau_{1:K}} \log p(x_{c}|s^\tau_{1:K}) \label{eq:likelihood1}\\
    % =&\nabla_{s^\tau_{1:K,1}} \log p(x_{1}|s^\tau_{1:K,1}) + \sum_{c=2}^{C}\nabla_{s^\tau} \log p(x_{c}|s^\tau_{1:K,1}) \label{eq:likelihood1}\\
    \simeq&\sum_{c=1}^{C}\nabla_{s^\tau_{1:K}} \log p(x_{c}|\hat{s}^0_{1}(s^\tau_{1}), ..., \hat{s}^0_{K}(s^\tau_{K}))\text{,~~~~~where } \hat{s}^0_{k}(s^\tau_{k})=D_\theta(s^\tau_{k}, \sigma(\tau))\label{eq:likelihood2}
\end{align}
Eq.~\ref{eq:likelihood1} is correct because we assume that for the mixtures, different channels' mixtures are conditionally independent given all the sources, which is our array-agnostic assumption. Eq.~\ref{eq:likelihood1} to Eq.~\ref{eq:likelihood2} is based on the likelihood approximation in DPS, as shown before in Eq.~\ref{eq:dps_app1}.  However, in Eq.~\ref{eq:likelihood2}, $\log p(x_{c}|\hat{s}^0_{1}(s^\tau_{1}), ..., \hat{s}^0_{K}(s^\tau_{K}))$ is still not tractable because the estimated clean sources are single-channel virtual sources, while the mixtures are multi-channel. To build the connection, we transfer to the STFT domain and then estimate the relative RIRs to transform from virtual sources to real channel source images.

First, follow the notation in Sec.~\ref{sec:fcp}, we denote $X_{c}$ as the STFT of $x_c$ and denote $\hat{S}^0_{k}$ as the STFT of $\hat{s}^0_{k}(s^\tau_{k})$. Then we use the FCP algorithm defined in Sec.~\ref{sec:fcp}  Eq.~\ref{eq:fcp_func} and Eq.~\ref{eq:fcp} to estimate the relative RIRs which allows the transformation from virtual source estimates to all-channel source images:
% \begin{align}
%     \hat{G}^{k}_{0\rightarrow c} &= \text{FCP}(X_{c}, \hat{S}^{0,\tau}_{k}) \label{eq:fcp_filter_gen}\\
%     \hat{S}_{k,c}(l, f) &= \hat{G}^{k}_{0\rightarrow c}(l, f) * \hat{S}_{k}(l, f) \label{eq:fcp_filter}
% \end{align}
\begin{align}
    \hat{G}^{k}_{0\rightarrow c} &= \text{FCP}(X_{c}, \hat{S}^{0}_{k})\text{, 
          }\hat{S}_{k,c}= \hat{G}^{k}_{0\rightarrow c} *_l \hat{S}^0_{k},~~~k\in\{1,2,...,K\},~~~c\in\{1,2,...,C\}\label{eq:fcp_filter}
\end{align}
Now given the estimated source images $\hat{S}_{k,c}$, we are able to approximate the likelihood score in Eq.~\ref{eq:likelihood2} in a tractable way:
\begin{align}
    &\nabla_{s^\tau_{1:K}} \log p(x_{c}|\hat{s}^0_{1}(s^\tau_{1}), ..., \hat{s}^0_{K}(s^\tau_{K}))\label{eq:likelihood_nonref1}\\
        = &\nabla_{s^\tau_{1:K}} \log p(X_{c}|\hat{S}^{0}_{1}, ..., \hat{S}^{0}_{K})\label{eq:likelihood_nonref2} \\
    \simeq &\nabla_{s^\tau_{1:K}} \log p(X_{c}|\hat{S}^{0}_{1}, ..., \hat{S}^{0}_{K}, \hat{G}^{1}_{0\rightarrow c}, ..., \hat{G}^{K}_{0\rightarrow c})\label{eq:likelihood_nonref3} \\
    = &\nabla_{s^\tau_{1:K}}\xi(\tau)\left\|X_{c}-\sum_{k=1}^{K}\left( \hat{G}^{k}_{0\rightarrow c}(l, f) *_l \hat{S}^0_{k}(l, f)\right)\right\|_2^2\label{eq:likelihood_nonref4}\\
    = &\nabla_{s^\tau_{1:K}}\xi(\tau)\left\|X_{c}-\sum_{k=1}^{K}\hat{S}_{k,c}\right\|_2^2\label{eq:likelihood_nonref5}\\
    \simeq &\nabla_{s^\tau_{1:K}}\xi(\tau)\left\|x_c-\text{ISTFT}\left(\sum_{k=1}^{K}\hat{S}_{k,c}\right)\right\|_2^2\label{eq:likelihood3}
\end{align}
Eq.~\ref{eq:likelihood_nonref1} to Eq.~\ref{eq:likelihood_nonref2} is just a STFT transform. Eq.~\ref{eq:likelihood_nonref2} to Eq.~\ref{eq:likelihood_nonref3} is an empirical approximation. A simple intuition of this is that Eq.~\ref{eq:likelihood_nonref3} is tractable while Eq.~\ref{eq:likelihood_nonref2} is not. The reason is that the relative RIRs ${G}^{k}_{0\rightarrow c}$ are needed for the likelihood to be tractable, according to the signal model in Eq.~\ref{eq:spec_model}. Thus an empirical choice is to estimate the ${G}^{k}_{0\rightarrow c}$ using FCP, as in Eq.~\ref{eq:fcp_filter}. Later in next section, we will show that these FCP estimated relative RIRs $\hat{G}^{1:K}_{0\rightarrow {1:C}}$ are exactly maximum likelihood relative RIRs: 
\begin{equation}\label{eq:gmap}
    \hat{G}^{1}_{0\rightarrow c}, .., \hat{G}^{K}_{0\rightarrow c} = \underset{ {G}^{1}_{0\rightarrow c}, .., {G}^{K}_{0\rightarrow c}}{\arg\max} p(X_{c}|\hat{S}^{0}_{1}, .., \hat{S}^{0}_{K}, {G}^{1}_{0\rightarrow c}, .., {G}^{K}_{0\rightarrow c})
\end{equation}
We analyze the validity from Eq.~\ref{eq:likelihood_nonref2} to Eq.~\ref{eq:likelihood_nonref3} (likelihood score approximation) in the next subsection where we also show the connection to the classic Expectation Maximization (EM) algorithm. Eq.~\ref{eq:likelihood_nonref3} to Eq.~\ref{eq:likelihood_nonref4} is based on the signal model in Eq.~\ref{eq:spec_model} with the Gaussian measurement noise. Eq.~\ref{eq:likelihood_nonref4} to Eq.~\ref{eq:likelihood_nonref5} is a simplification using Eq.~\ref{eq:fcp_filter}, and Eq.~\ref{eq:likelihood_nonref5} to Eq.~\ref{eq:likelihood3} is based on the power preservation of STFT or the STFT version of Parserval's theorem.

Now we can finalize our derivation of the approximated posterior score  using Eq.~\ref{eq:ourbayesian}, Eq.~\ref{eq:likelihood2}, and Eq.~\ref{eq:likelihood3}, which derives our result in Eq.~\ref{eq:final_score} Sec.~\ref{sec:score_cond}:
\begin{align}
    \nabla_{s^\tau_{1:K}} \log p(s^\tau_{1:K} | x_{1:C})\simeq\sum_{k=1}^{K}S_\theta(s^\tau_{k}, \sigma(\tau))+\sum_{c=1}^{C}\nabla_{s^\tau_{1:K}}\xi(\tau)\left\|x_c-\text{ISTFT}\left(\sum_{k=1}^{K}\hat{S}_{k,c}\right)\right\|_2^2 \label{eq:final_score2}
\end{align}
\subsection{Analysis and Validation}\label{app:analysis}
This section complements the derivation above showing the analysis and validation of our posterior score approximation. We first show that the relative RIR filter estimation using FCP is equivalent to maximum likelihood filter estimation, which would validate Eq.~\ref{eq:gmap}. We then validate the likelihood score approximation from Eq.~\ref{eq:likelihood_nonref2} to Eq.~\ref{eq:likelihood_nonref3}. Lastly, we show the likelihood score approximation's relationship with the classic Expectation Maximization algorithm.

To show the FCP as a maximum likelihood relative RIR estimator, we first recap the FCP formulation in Eq.~\ref{eq:fcp_func} and Eq.~\ref{eq:fcp_obj}, 
\begin{align}
    % \hat{\lambda}^k_c(l,f) &= \text{some expression defining lambda} \\
    \hat{G}^k_{0\rightarrow c}(l, f) &= \text{FCP} (X_c(l, f), \hat{S}_{k}(l, f))=\underset{G^k_{0\rightarrow c}(l, f)}{\arg\min} \sum_{l,f} \frac{1}{\hat{\lambda}^k_c(l,f)} \left|X_c(l, f) - G^k_{0\rightarrow c}(l, f)*\hat{S}_{k}(l, f)\right|^2\label{eq:fcp_func2}
    % &\hspace{4cm}+\epsilon\left\|G^k_{1,c}\right\|_2^2\\
    % &\hat{\lambda}^k_c(l,f) = 
\end{align}
\begin{theorem}\label{thm1}
Assume $X_c(l, f) =\sum\limits_{k=1}^{K} G^k_{0\rightarrow c}(l, f)*_l S_{k}(l, f) + N_c(l, f)$, where $N_c(l,f)\sim\mathcal{N}(0, \sigma_N^2I)$. Then when  $\hat{\lambda}^k_c(l,f)=\frac{1}{2\sigma_N^2}$, the FCP relative RIR estimator is a maximum likelihood filter estimator:
\begin{equation}
        \text{FCP}(X_c, \hat{S}_1^0), ..., \text{FCP}(X_c, \hat{S}_K^0)=\underset{{G}^{1}_{0\rightarrow c}, .., {G}^{K}_{0\rightarrow c}}{\arg\max} \log p(X_{c}| {G}^{1}_{0\rightarrow c}, .., {G}^{K}_{0\rightarrow c}, \hat{S}^{0}_{1}, .., \hat{S}^{0}_{K})
\end{equation}
\end{theorem}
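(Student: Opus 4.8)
The plan is to turn the maximum-likelihood problem into a least-squares problem and then show that, with the stated weight, it coincides with the FCP objective. First I would write the likelihood explicitly. Under the signal model $X_c(l,f)=\sum_{k}G^k_{0\rightarrow c}(l,f)*_l\hat{S}_k(l,f)+N_c(l,f)$ with $N_c(l,f)\sim\mathcal{N}(0,\sigma_N^2 I)$ independent across $(l,f)$ bins, the residual $X_c-\sum_k G^k_{0\rightarrow c}*_l\hat{S}_k$ is exactly the noise, so the log-likelihood factorizes over bins and equals, up to an additive constant, $-\frac{1}{2\sigma_N^2}\sum_{l,f}\bigl|X_c(l,f)-\sum_{k}G^k_{0\rightarrow c}(l,f)*_l\hat{S}_k(l,f)\bigr|^2$ (the factor $\tfrac12$ coming from treating the real and imaginary parts as independent $\mathcal{N}(0,\sigma_N^2)$ variables). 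Maximizing this is equivalent to minimizing the sum of squared residuals.

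Next I would match this to FCP. Since $\hat{\lambda}^k_c(l,f)=\frac{1}{2\sigma_N^2}$ is constant across $(l,f)$, the FCP weight $1/\hat{\lambda}^k_c$ is a global constant, so the FCP objective is a constant multiple of the negative log-likelihood and shares its minimizer; the particular value $\frac{1}{2\sigma_N^2}$ is what ties this constant to the noise variance. To make the minimization concrete I would vectorize per frequency: collect the filter taps $\{G^k_{0\rightarrow c}(j,f)\}_{j=-F}^{P}$ into a vector $\mathbf{g}^k_f$ and write the convolution $G^k_{0\rightarrow c}*_l\hat{S}_k$ as $\mathbf{A}^k_f\mathbf{g}^k_f$, where the rows of the design matrix $\mathbf{A}^k_f$ are the shifted source frames $\hat{S}_k(l-j,f)$. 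Then both the joint ML problem and each single-source FCP become linear least squares with closed-form normal equations.

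The hard part, and the crux of the theorem, is reconciling that the right-hand side is a single \emph{joint} optimization over all $K$ filters whereas the left-hand side is $K$ \emph{independent} single-source FCPs. Stacking the blocks into $\mathbf{A}_f=[\mathbf{A}^1_f|\cdots|\mathbf{A}^K_f]$ and $\mathbf{g}_f=[\mathbf{g}^1_f;\dots;\mathbf{g}^K_f]$, the joint normal equations $\mathbf{A}_f^{H}\mathbf{A}_f\,\mathbf{g}_f=\mathbf{A}_f^{H}\mathbf{x}_{c,f}$ (with $\mathbf{x}_{c,f}$ the vector of $X_c(l,f)$ over $l$) couple the filters through the off-diagonal Gram blocks $(\mathbf{A}^k_f)^{H}\mathbf{A}^{k'}_f$, the cross-correlations between distinct source estimates, while the single-source FCP solutions ignore these blocks. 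The two coincide exactly only when those cross blocks vanish, i.e.\ when distinct source STFTs are orthogonal; a simple example with two identical sources $\hat{S}_1=\hat{S}_2$ already shows the claim fails without such a condition. I would therefore close the argument by invoking the (approximate) statistical independence and W-disjoint orthogonality of speech in the time-frequency domain, under which the cross-source Gram blocks are zero in expectation (and approximately zero for sparse speech), the joint system decouples block-diagonally, and the $k$-th block of the joint solution reduces precisely to $\text{FCP}(X_c,\hat{S}_k^0)$. Making this orthogonality step rigorous, or stating it cleanly as the operative assumption, is where I expect the main difficulty to lie.
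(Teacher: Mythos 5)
Your proposal takes essentially the same route as the paper's proof: under the Gaussian noise model, maximum likelihood reduces to the joint least-squares problem $\min_{G^{1}_{0\rightarrow c},\dots,G^{K}_{0\rightarrow c}}\bigl\|X_c-\sum_{k} G^{k}_{0\rightarrow c}*_l\hat{S}^0_k\bigr\|_2^2$, which is then claimed to decouple into the $K$ separate single-source FCP problems. The difference lies in how the decoupling is handled: the paper dispatches it in one sentence, invoking ``the orthogonal principle in estimation theory, independence of the sources, and linearity of the filtering operation,'' whereas you correctly isolate it as the crux, show that the joint normal equations couple the filters through the off-diagonal Gram blocks $(\mathbf{A}^k_f)^{H}\mathbf{A}^{k'}_f$, exhibit the identical-sources counterexample demonstrating the decoupling fails in general, and name cross-source orthogonality (statistical independence or W-disjoint orthogonality of the estimates) as the operative assumption. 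That assumption is exactly what the paper's one-line appeal amounts to, so your argument is correct to the same degree as the paper's, and your explicit statement of the condition under which the joint and per-source solutions coincide is more careful than the paper's own treatment of that step.
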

\begin{proof}
First, let $\hat{G}^{1}_{0\rightarrow c}, .., \hat{G}^{K}_{0\rightarrow c}$ be the maximum likelihood estimator as shown below:
\begin{align}
    \hat{G}^{1}_{0\rightarrow c}, .., \hat{G}^{K}_{0\rightarrow c} &= \underset{{G}^{1}_{0\rightarrow c}, .., {G}^{K}_{0\rightarrow c}}{\arg\max} \log p(X_{c}| {G}^{1}_{0\rightarrow c}, .., {G}^{K}_{0\rightarrow c}, \hat{S}^{0}_{1}, .., \hat{S}^{0}_{K})\label{eq:fcp_ml1}\\
    &= \underset{{G}^{1}_{0\rightarrow c}, .., {G}^{K}_{0\rightarrow c}}{\arg\max} \log \mathcal{N}(X_{c}; \sum\limits_{k=1}^{K} {G}^{K}_{0\rightarrow c} *_l \hat{S}^{0}_{K}, \sigma^2_NI)\label{eq:fcp_ml2}\\
    &= \underset{{G}^{1}_{0\rightarrow c}, .., {G}^{K}_{0\rightarrow c}}{\arg\min} \frac{1}{2\sigma^2_N} \left\|X_{c} - \sum\limits_{k=1}^{K} {G}^{K}_{0\rightarrow c} *_l \hat{S}^{0}_{K}\right\|_2^2\label{eq:fcp_ml3}
\end{align}
Eq.~\ref{eq:fcp_ml1} to Eq.~\ref{eq:fcp_ml3} is based on the signal model in the assumption. Then based on the orthogonal principle in estimation theory, independence of the sources, and linearity of the filtering operation, Eq.~\ref{eq:fcp_ml3} is equivalent to:
\begin{equation}
    \hat{G}^{k}_{0\rightarrow c} = \underset{{G}^{k}_{0\rightarrow c}}{\arg\min} \frac{1}{2\sigma^2_N} \left\|X_{c} - {G}^{k}_{0\rightarrow c} *_l \hat{S}^{0}_{k}\right\|_2^2~~~~~~~,\forall k \in \{1, 2, ..., K\} \label{eq:fcp_ml4}
\end{equation}
which is exactly the FCP objective mentioned in Eq.~\ref{eq:fcp_func2}, where $\hat{\lambda}^k_c(l,f)=\frac{1}{2\sigma_N^2}$ as in the assumption. 
\end{proof} With Theorem~\ref{thm1}, it is also obvious that FCP is equivalent to the maximum posterior solution of the relative RIR because we do not assume any prior of the relative RIR filter. Note that empirically in FCP, the weight $\hat{\lambda}^k_c(l,f)$ is set to $\hat{\lambda}^k_c(l,f) = \frac{1}{C} \sum_{c=1}^C |X_{c}(l, f)|^2+ \epsilon\cdot \max\limits_{l,f}\frac{1}{C} \sum_{c=1}^C |X_{c}(l,f)|^2$ for better regularization.

Now with Theorem~\ref{thm1} relating FCP with filter likelihood, we further analyze the correctness of the likelihood score approximation from Eq.~\ref{eq:likelihood_nonref2} to Eq.~\ref{eq:likelihood_nonref3} in the derivation section~\ref{app:derivation}.   From Eq.~\ref{eq:likelihood_nonref2} to Eq.~\ref{eq:likelihood_nonref3}, we have the FCP estimated relative RIRs so that the likelihood is tractable:
\begin{align}
   \nabla_{s^\tau_{1:K}} \log p(X_{c}|\hat{S}^{0}_{1}, ..., \hat{S}^{0}_{K})
    \simeq \nabla_{s^\tau_{1:K}} \log p(X_{c}|\hat{S}^{0}_{1}, ..., \hat{S}^{0}_{K}, \hat{G}^{1}_{0\rightarrow c}, ..., \hat{G}^{K}_{0\rightarrow c})\label{eq:likelihood_approx}
\end{align}
Now we find the assumption that would make the approximation into equality:
\begin{align}
    p&(X_{c}|\hat{S}^{0}_{1}, ..., \hat{S}^{0}_{K})= \int_{{G}^{1}_{0\rightarrow c}, ..., {G}^{K}_{0\rightarrow c}}p({G}^{1}_{0\rightarrow c}, ..., {G}^{K}_{0\rightarrow c})p(X_{c}|\hat{S}^{0}_{1}, ..., \hat{S}^{0}_{K}, {G}^{1}_{0\rightarrow c}, ..., {G}^{K}_{0\rightarrow c})d{G}^{1}_{0\rightarrow c}, ..., d{G}^{K, \tau}\\
    &=p(X_{c}|\hat{S}^{0}_{1}, ..., \hat{S}^{0}_{K}, \hat{G}^{1}_{0\rightarrow c}, ..., \hat{G}^{K}_{0\rightarrow c}) \text{~~~~if~~} p(X_{c}|\hat{S}^{0}_{1}, ..., \hat{S}^{0}_{K}, {G}^{1}_{0\rightarrow c}, ..., {G}^{K}_{0\rightarrow c})= 
    \begin{cases}
        1 \text{~~~~if~~} {G}^{k}_{0\rightarrow c}=\hat{G}^{k}_{0\rightarrow c} \text{ for all } k\\
        0 \text{~~~~else}
    \end{cases}
\end{align}
We can see that the approximation becomes equality under the assumption that $p(X_{c}|\hat{S}^{0}_{1}, ..., \hat{S}^{0}_{K}, {G}^{1}_{0\rightarrow c}, ..., {G}^{K}_{0\rightarrow c})$ is a delta function of all the relative RIRs ${G}^{1}_{0\rightarrow c}, ..., {G}^{K}_{0\rightarrow c}$, where the function has non-zeros values only when the relative RIRs are the FCP estimated relative RIRs $\hat{G}^{1}_{0\rightarrow c}, ..., \hat{G}^{K}_{0\rightarrow c}$, which are also the maximum likelihood estimators as mentioned before. This assumption makes sense when there is no measurement noise, where a slight error in the relative RIR would cause a likelihood of 0. However, of course, when there is relatively larger measurement noise, the assumption is no longer true, but would still work empirically as shown in our SMS-WSJ separation result in Table~\ref{tab:smswsj}.

\subsection{Likelihood Approximation and Expectation Maximization}\label{app:em}
While we analyze the correctness of our posterior score approximation in the previous section, we also find that the most challenging step (likelihood score approximation~Eq.~\ref{eq:likelihood_approx}) has a strong connection with the classic Expectation Maximization algorithm. 
% We then start to relate our likelihood approximation (Eq.~\ref{eq:likelihood_approx} or Eq.~\ref{eq:likelihood_nonref3}) with the classic Expectation Maximization (EM) algorithm. 
Note that in EM algorithm, when the likelihood $p(X_{c}|\hat{S}^{0}_{1}, ..., \hat{S}^{0}_{K})$ is intractable because of the unobserved relative RIRs ${G}^{1}_{0\rightarrow c}, ..., {G}^{K}_{0\rightarrow c}$, the Expectation (E) step and the Maximization (M) step are iterated:
\begin{align}
    &\text{E-Step: calculate }\psi({G}^{1}_{0\rightarrow c}, ..., {G}^{K}_{0\rightarrow c})\gets p({G}^{1}_{0\rightarrow c}, ..., {G}^{K}_{0\rightarrow c} | X_{c}, \hat{S}^{0}_{1}, ..., \hat{S}^{0}_{K})\label{eq:estep}\\
    &\text{M-Step: update }\hat{S}^{0}_{1}, ..., \hat{S}^{0}_{K}\gets \underset{\hat{S}^{0}_{1}, ..., \hat{S}^{0}_{K}}{\arg\max}~\mathbb{E}_{\psi({G}^{1}_{0\rightarrow c}, ..., {G}^{K}_{0\rightarrow c})}\left[\log~p(X_{c}|\hat{S}^{0}_{1}, ..., \hat{S}^{0}_{K}, {G}^{1}_{0\rightarrow c}, ..., {G}^{K}_{0\rightarrow c})\right]\label{eq:mstep}
\end{align}
The E step updates the distribution $\psi({G}^{1}_{0\rightarrow c}, ..., {G}^{K}_{0\rightarrow c})$ of the unobserved relative RIRs to be the distribution of the relative RIRs given the current source estimates $\hat{S}^{0}_{1}, ..., \hat{S}^{0}_{K}$ and the mixture $X_{c}$. Then using the E step updated distribution $\psi({G}^{1}_{0\rightarrow c}, ..., {G}^{K}_{0\rightarrow c})$, the M step maximizes the expectation of log likelihood by update the source estimates $\hat{S}^{0}_{1}, ..., \hat{S}^{0}_{K}$. Then this procedure is iterated until convergence, which is called the Expectation Maximization algorithm.

In our case, we only want to estimate the current log likelihood without updating the sources estimate $\hat{S}^{0}_{1}, ..., \hat{S}^{0}_{K}$, so our log likelihood approximation only contains one E step and then calculate the expected log likelihood as in the M step (Eq.\ref{eq:mstep}). However, The E step is also not tractable in our case, so we just approximate it with:
\begin{align}
     \psi({G}^{1}_{0\rightarrow c}, ..., {G}^{K}_{0\rightarrow c})&\simeq
 \begin{cases}
    1 \text{~~if  } {G}^{k}_{0\rightarrow c}=\hat{G}^{k}_{0\rightarrow c} \text{ for all } k\\
    0 \text{~~else}
\end{cases}\\
\text{where }\hat{G}^{1}_{0\rightarrow c}, \hat{G}^{2, \tau}_{0\rightarrow c}, ..., \hat{G}^{K}_{0\rightarrow c}&=\underset{{G}^{1}_{0\rightarrow c}, .., {G}^{K}_{0\rightarrow c}}{\arg\max} \log p({G}^{1}_{0\rightarrow c}, .., {G}^{K}_{0\rightarrow c}|X_{c}, \hat{S}^{0}_{1}, .., \hat{S}^{0}_{K})\label{eq:em1}\\
&=\underset{{G}^{1}_{0\rightarrow c}, .., {G}^{K}_{0\rightarrow c}}{\arg\max} \log p(X_{c}| {G}^{1}_{0\rightarrow c}, .., {G}^{K}_{0\rightarrow c}, \hat{S}^{0}_{1}, .., \hat{S}^{0}_{K})\label{eq:em2}\\
&=\text{FCP}(X_c, \hat{S}^{0}_{1}),~\text{FCP}(X_c, \hat{S}^{0}_{2}), ...,~\text{FCP}(X_c, \hat{S}^{0}_{K})\label{eq:em3}
\end{align}
We set the distribution $\psi({G}^{1}_{0\rightarrow c}, ..., {G}^{K}_{0\rightarrow c})$ in the E step to be deterministic at $\hat{G}^{1}_{0\rightarrow c}, ..., \hat{G}^{K}_{0\rightarrow c}$, which are both the maximum likelihood filters and maximum posterior filters because no priors are assumed for the filters.
Then as mentioned in Sec.~\ref{app:analysis} Theorem~\ref{thm1}, $\hat{G}^{1}_{0\rightarrow c}, ..., \hat{G}^{K}_{0\rightarrow c}$ can be directly estimated by the FCP method. Then after the E step, we can calculate the expected log likelihood in M step by only calculating $\log~p(X_{c}|\hat{S}^{0}_{1}, ..., \hat{S}^{0}_{K}, \hat{G}^{1}_{0\rightarrow c}, ..., \hat{G}^{K}_{0\rightarrow c})$, because $\psi$ is defined to be a deterministic distribution.

\section{Algorithm Details}\label{app:algorithms}
This section provides all the details of the two algorithms in Sec.~\ref{sec:algo}, including \method~as in Algorithm~\ref{alg:ubliss} and posterior score approximation in Algorithm~\ref{algo:score_cond}.
\subsection{\method~Algorithm}\label{app:algorithm_arraydps}
This section discusses the \method~Algorithm~\ref{alg:ubliss} in detail.\\
{\bf IVA Initialization:}
From lines 1-6 in Algorithm~\ref{alg:ubliss}, IVA first separates $K$ speech source images in the reference channel (channel 1). Then these IVA-separated sources are used as source estimates to calculate the relative RIR using FCP. Note that in lines 4-6, the relative RIR $\hat{G}^{\text{{IVA}}}_{1\rightarrow c}$ is calculated for all channels. These IVA-initialized relative RIRs will be later used for posterior score approximation as in Algorithm~\ref{algo:score_cond}, which will be discussed later. Then line 7-9 initializes the starting diffusion noise with the IVA separated sources, in the reference channel (channel 1). Although we are trying to output a virtual source, we still initialize it to be the reference channel's IVA outputs.

{\bf Stochastic Sampler:}
Lines 10-24 in Algorithm~\ref{alg:ubliss} show the discrete sampler that integrates through the separation ODE in Eq.~\ref{eq:ode_method_overview} starting from the diffusion noise initialized in line 8. This sampler is proposed in EDM~\cite{edm}, with the parameters $N, \sigma_{i\in\{0,...,N-1\}}, \gamma_{i\in\{0,...,N-1\}}, S_{\text{noise}}\}$ and inputs $\{x_{1:C}, K\}$. $x_{1:C}$ is the multi-channel mixture input and $K$ is the number of sources to separate. $N$ is the number of diffusion denoising steps. $\sigma_{i\in\{0,...,N\}}$ is the noise schedule for each diffusion denoising step (i.e., amount of noise to remove at each step), where $\sigma_N = 0, \sigma_0=\sigma(\tau_{max}), \sigma_{N-1}=\sigma(\tau_{\text{min}})$. Then all other steps' schedule follows:
    \begin{align}\label{eq:sigma}
    \sigma_{i < N} = \left( \sigma(\tau_\text{max})^{\frac{1}{\rho}} + \frac{i}{N-1} \left( \sigma(\tau_\text{min})^{\frac{1}{\rho}} - \sigma(\tau_\text{max})^{\frac{1}{\rho}} \right) \right)^{\rho} \quad
    \end{align}
$\rho$ is a parameter that controls how the noise level per step changes from $\sigma(\tau_{max})$ to $\sigma(\tau_{min})$, e.g., $\rho > 1$ means the noise level per step will decrease more rapidly as $i$ increases. Note that in our ODE settings as in Sec.\ref{sec:method}, $\sigma(\tau_i)=\tau_i=\sigma_i$. $\gamma_{i\in\{1,...,N-1\}}$ are parameters to control the amount of stochasticity added in the beginning of each step, as in lines 11-14. In EDM~\cite{edm}, $\gamma_{i\in\{1,...,N-1\}}$ is set to be:
\begin{align}\label{eq:gamma}
    \gamma_i = 
    \begin{cases} 
    \min \left( \frac{S_{\text{churn}}}{N}, \sqrt{2} - 1 \right) & \text{if } \sigma_i \in [S_{\text{min}}, S_{\text{max}}] \\
    0 & \text{otherwise}
    \end{cases}
\end{align}
, where $S_{\text{churn}}$ controls the amount of stochasticity and $S_{\text{min}}, S_{\text{max}}$ set the steps for adding stochasticity. $S_{\text{noise}}$ in line 12 also controls the amount of stochasticity within step $i\in[S_{\text{min}}, S_{\text{max}}]$. In general, the sampler first has a usual $1^{\text{st}}$ order Euler step to update the sources (16-18), and then the newly updated source is used to calculate a $2^{\text{nd}}$ order correction step (lines 20-22). More details of the sampler can be checked in the original EDM paper~\cite{edm}.

{\bf Post FCP Filtering:}
Note the sampler mentioned above in Algorithm~\ref{sec:algo} is separating virtual sources (line 24). To get the final separated source images in reference channel 1, post-filtering is needed. Lines 26-29 use the FCP algorithm to estimate the relative RIR that can transform the virtual sources into the reference channel. Then after convolutional filtering (line 28), the final reference channel-separated source images are returned as outputs (line 30).

% {\bf Posterior Score Estimation:}
\subsection{Posterior Score Approximation Algorithm}
In Algorithm~\ref{alg:ubliss}, each update step needs to calculate the posterior score, which uses the \texttt{GET\_SCORE} function in Algorithm~\ref{algo:score_cond}. The function \texttt{GET\_SCORE}($s^{\tau_i}_{1:K}$, $x_{1:C}$, $\hat{G}^{k,\text{IVA}}_{1\rightarrow~c}$, $i$) aims to approximate the posterior score $\nabla_{s^\tau_{1:K}} \log p(s^\tau_{1:K} | x_{1:C})$ at sampling step $i$, using our propose approximation formulation in Sec.~\ref{sec:score_cond}, but with some empirical modifications. 
% $s^\tau_{0,1:K}$ and $x_{1:C}$ are obviously the inputs to the function, while the IVA initialized relative RIRs $\hat{G}^{k,\text{IVA}}_{0, c}$ and the sampler step $i$ are also needed.
The function needs a few pre-defined parameters $\{D_\theta, \sigma_{i\in\{0,...,N\}}, N_{\text{ref}}, N_{\text{fg}}, \xi_1(\tau), \xi_2(\tau)\}$. $D_\theta$ is the diffusion denoiser mentioned in Sec.~\ref{sec:score_cond}. $\sigma_{i\in\{0,...,N\}}$ is the sampler's noise schedule as mentioned in the stochastic sampler paragraph. 

$N_{\text{ref}}$ is a sampler step threshold such that when $i\leq N_{\text{ref}}$, there is an empirical reference-channel guidance term that aims to match the sum of estimated virtual sources $\sum_{k=1}^{K}\hat{s}_k$ to the reference channel (channel 1) mixture $x_1$. This is shown in line 18 in Algorithm.~\ref{algo:score_cond}. We found that this is important to the algorithm's stability. $N_{\text{fg}}$ is also a sampler step threshold for the filtering guidance of IVA initialized relative RIR. As shown in lines 7-14, when $i\leq N_{\text{fg}}$, instead of using FCP to estimate the relative RIR, the IVA initialized relative RIR $\hat{G}^{k,\text{IVA}}_{1\rightarrow~c}$ is directly used. We found that this helps a lot for sampling stability and performance gain. The intuition is that in the first $N_{\text{fg}}$ steps, the source estimate would be extremely blurred and so as the relative RIR estimates, while the IVA initialized relative RIR is a much better choice. Note that $\hat{G}^{k,\text{IVA}}_{1\rightarrow~c}$ are actually relative RIRs from reference channel 1 to all channels, but our estimated sources are all virtual sources. We clarify here that this is on purpose, which means in the first $N_{\text{fg}}$ steps, the virtual source estimation would be close to the reference channel, and then the constraint is relaxed in later steps.

$\xi_1(\tau)$ and $\xi_2(\tau)$ are the weights for the mixture likelihood guidance in lines 16 and l8 in Algorithm.~\ref{algo:score_cond}. Theoretically, $\xi_1(\tau)=\frac{1}{2\sigma_n^2}$, where $\sigma^2_n$ is the white noise variance defined in Eq.~\ref{eq: waveform_mix}, and $\xi_2(\tau)=0$, because the term in line 18 is empirical. However, for sampling stability, we resort to the common practice used in previous audio posterior sampling work~\cite{cqt, bwe, buddy}, where we set $\xi_1(\tau)$ and $\xi_2(\tau)$ to the following:
\begin{align}
    \xi_1(\tau) &= \frac{\xi \sqrt{T}}{\tau\left\|\nabla_{s^\tau_{1:K}}\sum\limits_{c=1}^{C}\left\|x_c-\sum_{k=1}^{K}\hat{s}_{k,c}\right\|_2^2\right\|_2}\label{eq:xi1}\\
    \xi_2(\tau) &= \frac{\xi \sqrt{T}}{\tau\left\|\nabla_{s^\tau_{1:K}}\left\|x_1-\sum_{k=1}^{K}\hat{s}_{k}\right\|_2^2\right\|_2}\label{eq:xi2}
\end{align}
All the variables in Eq.~\ref{eq:xi1} and Eq.~\ref{eq:xi2} directly refer to the same variables in lines 16 and 18 in Algo.~\ref{algo:score_cond}. $T$ is the length of the audio samples. $\xi$ is just a single scalar parameter to tune the mixture likelihood guidance. Note that $\xi_1(\tau)$ and $\xi_2(\tau)$ directly depend on the calculated gradient term so the the notations $\xi_1(\tau)$ and $\xi_2(\tau)$ are not accurate. However, we stick to it because the notation is simpler and it is also used in previous papers~\cite{cqt, bwe}.

Overall, in Algorithm~\ref{algo:score_cond}, the diffusion denoiser first denoises the noisy virtual sources (lines 3), and then the prior score is calculated using Tweedie's formula (line 6). Then depending on the current sample step $i$, the algorithm either chooses the IVA initialized relative RIR (line 12) or directly estimates the relative RIR (line 10) using FCP. The relative RIR is then used to filter the virtual source estimates to all real channels (channel 1 to C) (line 14), which allows the calculation of the gradient likelihood guidance (line 16) and the reference channel guidance (19). Finally, the prior score and the guidance terms are added together to output the approximated posterior score (line 21).

\subsection{Algorithm Configurations}\label{app:algorithm_param}
Here we show our \method~configurations. For the default configuration as in row 2a (\method-A) in Table~\ref{tab:smswsj}, we set $N=400$ and $S_{\text{noise}}=1$ as in Algorithm~\ref{alg:ubliss}, $\sigma_0=\tau_{\text{max}}=0.8$, $\sigma_N=\tau_{\text{min}}=1e-6$, $\rho=10$ as in Eq.~\ref{eq:sigma}, $S_{\text{min}}=0$, $S_{\text{max}}=50$, and $S_{\text{churn}}=30$ as in Eq~\ref{eq:gamma}, $\xi=2$, $N_{\text{ref}}=200$, $N_{\text{fg}}=100$, and $\lambda=1.3$ as in Algorithm~\ref{algo:score_cond}. The diffusion prior trained on anechoic speech is used. For the FCP algorithm, we use an FFT size of 512 (64 ms), hop size of 64 (8 ms), square root Hanning window, and $\epsilon=0.001$ as in Eq.~\ref{eq:fcp_lambda}. For the IVA initialization, we use the open-source \textit{torchiva} toolkit~\cite{torchiva}, and we use FFT size of 2048 (256 ms), hop size 256 (32 ms), Gaussian Prior, and 100 iterations. 

For \method-B in row 2b of Table~\ref{tab:smswsj}, the diffusion model is trained on reverberant speech, as discussed in Sec.~\ref{sec:arraydps_configs} and Appendix~\ref{app:diffusion}. For \method-C in row 2c of Table~\ref{tab:smswsj}, the reference channel guidance in Algorithm~\ref{algo:score_cond} is removed, which means $\lambda=0$. For \method-D and \method-E in row 2d-2e of Table~\ref{tab:smswsj}, the IVA initialization is removed, which means $N_{\text{fg}}=0$ and no IVA is needed. Moreover, in the case of no IVA initialization, we set $\sigma_0=\tau_{\text{max}}=2$ and $\xi=6$ since we found this set of parameters perform better when tuning on validation sets. For \method-E, the reference-channel guidance is also removed, so $N_{\text{ref}}=0$ in this case.

\section{Diffusion Training Details}\label{app:diffusion}
% (need to first mention training datasets)
As mentioned in Sec.\ref{sec:arraydps_configs}, the diffusion prior is trained on the train-clean-100 and train-clean-360 subset of LibriTTS dataset~\cite{libritts}. This subset contains about 460 hours clean speech with 1000+ speakers. However, this training set is primarily reverberation-free or anechoic. Since the virtual source could be reverberant, we also train a second diffusion model on reverberant speech. The reverberant speech dataset is the same subset of LibriTTS, but each speech utterance is convolved with a randomly selected room impulse response (RIR) to synthesize reverberant speech. The randomly selected RIRs are from the SLR26 and SLR28 dataset~\cite{slr}, which contains 3,076 real and about 115,000 synthetic RIRs. 

For the diffusion denoising architecture, we use the waveform domain U-Net as MSDM~\cite{msdm}, implemented in audio-diffusion-pytorch/v0.0.432\footnote{\href{https://github.com/archinetai/audio-diffusion-pytorch/tree/v0.0.43}{https://github.com/archinetai/audio-diffusion-pytorch/tree/v0.0.43}}. The U-Net takes audio waveform as inputs, and also outputs audio waveforms. The U-Net consists of encoder, bottleneck, and decoder. The encoder contains 6 layers of 1-D convolutional ResNet blocks, where the last three layers also contain multi-head slef-attention blocks following the ResNet block. The input channel is 1 because we are modeling single channel speech. The number of output channels for each encoder layer is [256, 512, 1024, 1024, 1024, 1024], and the number of downsampling factors for each layer is [4, 4, 4, 2, 2, 2]. For self attention blocks, we use 8 attention heads where each is 128 dimensional.
The bottleneck also contains an attention block, with one ResNet block before the attention block and one ResNet block after the attention block. The decoder is then reverse symmetric to the encoder, with output channel size 1 to output a same size waveform.

For diffusion training configurations, we use the EDM~\cite{edm} training framework following the training recipe in CQT-Diffusion\footnote{\href{https://github.com/eloimoliner/CQTdiff}{https://github.com/eloimoliner/CQTdiff}}. The diffusion denoiser $D_\theta(s^\tau,\sigma(\tau))$ is set to be a linear combination of the noisy input and the U-Net $f_\theta$ output:
\begin{align}
    D_\theta(s^\tau,\sigma(\tau)) = c_{\text{skip}}(\sigma(\tau))\cdot s^\tau + C_{\text{out}}(\sigma(\tau)) \cdot f_\theta(c_{\text{in}}s^\tau, c_{\text{noise}})\\
\end{align}
,where $c_\text{skip}(\sigma)$, $c_\text{out}(\sigma)$, $c_\text{in}(\sigma)$, $c_\text{noise}(\sigma)$, are defined as:
\begin{equation}
    c_{\text{skip}}(\sigma) = \frac{\sigma^2_{\text{data}}}{\sigma^2 + \sigma^2_{\text{data}}},\;
    c_{\text{out}}(\sigma) = \frac{\sigma\cdot\sigma_{\text{data}}}{\sqrt{\sigma^2 + \sigma^2_{\text{data}}}},\;
    c_{\text{in}}(\sigma) = \frac{1}{\sqrt{\sigma^2 + \sigma^2_{\text{data}}}},\;
     c_{\text{noise}}(\sigma) = \frac{1}{4} \ln(\sigma)
\end{equation}
$\sigma_{\text{data}}$ shown above is a parameter to set based on the standard deviation of the training dataset. We set this value to 0.057. Then the diffusion denoising objective is shown as:
\begin{equation}\label{eq:denoise_diff}
\mathbb{E}_{\tau \sim p_\tau }\mathbb{E}_{s \sim p_{\text{data}}}\mathbb{E}_{n \sim \mathcal{N}(0, \sigma^2(\tau) I)} \left\| D_\theta(s + n, \sigma(\tau)) - s \right\|^2
\end{equation}
where $p_\tau$ is empirically set to match the diffusion sampling scheduler's density. During training, we set $\sigma(\tau_{\text{max}})=10$, $\sigma(\tau_{\text{min}})=1e-6$, $\rho=10$, $S_{\text{churn}}=5$ for the diffusion scheduler. We train on speech samples with 65,536 samples ($\sim$8.2 s) with batsh size 16 and learning rate 0.0001. The learning rate is multiplied by 0.8 every 60,000 training steps. Also exponential moving average (EMA) is used to update the neural network weights during training. The EMA updated weights after training is used for sampling in inference time. We train the model for 840,000 training steps.

\section{Evaluation Datasets}\label{app:dataset}
In this paper, two datasets are used for evaluation. The first dataset is SMS-WSJ~\cite{smswsj} for \textbf{fixed} microphone array evaluation, where all the methods are evaluated on one single microphone array recorded samples. The second dataset is Spatialized WSJ0-2Mix~\cite{mcdeepcluster} for \textbf{ad-hoc} microphone array evaluation, where methods are evaluated on multiple unknown microphone arrays.
\subsection{SMS-WSJ Dataset}\label{app:smswsj}
SMS-WSJ~\cite{smswsj} is a commonly used reverberant speech separation corpus~\cite{unssor, smswsjuse, eras, tfcrossnet}. The dataset is simulated from WSJ0 and WSJ1 datasets. The dataset is for multi-channel 2-speaker separation in reverberant conditions for a fixed simulated microphone array. The simulated microphone array is a circular array with six microphones, uniformly on a circle with 10 cm of radius. For each sample, the shoebox room is randomly sampled, with the length and width uniformly sampled from [7.6, 8.4] m and [5.6, 6.4] m, respectively. The array center position is randomly sampled from [3.6, 4.4] m from the shorter wall and [2.6, 3.4] m from the longer wall. Then the array is randomly rotated along all three geometric axes. The mixture contains two speaker sources, which are sampled form WSJ0 and WSJ1 datasets. For each speaker, the speaker-to-array distance is uniformly sampled from [1.0, 2.0] m, and each speaker position's azimuth related to the array center is also randomly sampled. The room impulse responses are simulated using image-source method~\cite{imagesource}. The sound decay time (T60) is uniformly sampled from [200, 500] ms. Finally, to simulate sensor noise, an SNR value is sampled to be [20, 30] dB, and then the white noise is sampled and scaled to satisfy the SNR required. The dataset consists 33,561 ($\sim$87.4 h), 982 ($\sim$2.5 h), and 1,332 ($\sim$3.4 h) train, validation, and test mixtures, respectively, all in 8kHz sampling rate.

% SMS-WSJ [67] is a popular corpus for evaluating two-speaker separation algorithms in reverberant
% conditions. The clean speech is sampled from the WSJ0 and WSJ1 datasets. The corpus contains
% 33, 561 (∼87.4 h), 982 (∼2.5 h), and 1, 332 (∼3.4 h) two-speaker mixtures respectively for training,
% validation, and testing. The simulated microphone array has six microphones arranged uniformly on
% a circle with a diameter of 20 cm. For each mixture, the speaker-to-array distance is drawn from the
% range [1.0, 2.0] m, and the reverberation time (T60) is sampled from [0.2, 0.5] s. A weak white noise
% is added to simulate microphone self-noises, and the energy level between the sum of the reverberant
% speech signals and the noise is sampled from the range [20, 30] dB. The sampling rate is 8 kHz.
\subsection{Spatialized WSJ0-2Mix Dataset}\label{app:spatialwsj}
Spatialized WSJ0-2Mix~\cite{mcdeepcluster} is the spatialized version of the WSJ0-2Mix~\cite{deepcluster} dataset, which is a commonly used speech separation dataset mixed by randomly selecting sources from the WSJ0 corpus. Multi-channel Deep Clustering~\cite{mcdeepcluster} creates the Spatialized WSJ0-2Mix dataset for multi-channel reverberant speech separation. For each dataset sample in WSJ0-2Mix, a shoebox room, 8-channel microphone array geometry, source positions, and microphone array center are sampled randomly to spatialize that sample. The room length $l$, width $w$, and height are uniformly sampled from [5, 10] m, [5, 10] m, and [3, 4] m, respectively. The RIRs' sound decay time (T60) is sampled from [200, 600] ms. The microphone array center is set to be $(l/2+n_1, w/2+n_2, h)$, where $n_1$ and $n_2$ are uniformly sampled from [-0.2, 0.2] m, and $h$ is uniformly sampled from [1, 2] m. Then the microphone array geometry is randomly sampled. First, the array radius is uniformly sampled from [7.5, 12,5] cm. The first two microphones are sampled on the sphere with the sampled radius, and are symmetric according to the array origin. Two more microphones are randomly sampled inside the sphere while making sure the distances between any two microphones has to be at least 5 cm. Then 4 more microphones are randomly sampled inside the sphere without any restrictions. In total, eight microphones are randomly sampled. For the speakers, each speaker's location $(x, y, z)$ is sampled such that $l/2+n_1-1.5\leq x\leq l/2+n_1+1.5, w/2+n_2-1.5\leq y\leq w/2+n_2+1.5, 1.5\leq z\leq2$. The source locations are sampled such that the source-array distance is at least 0.5 m, and source-source distance is at least 1 m. The RIR simulation is also using the image source method~\cite{imagesource}. The sample rate is set to 8kHz.

In general, the Spatialized WSJ0-2Mix dataset contains 20,000($\sim$30h), 5,000 ($\sim$10h), and 3,000 ($\sim$5h) utterances in training, validation, and testing, respectively. Different utterances are using different microphone geometries, allowing training models that can generalize to diverse ad-hoc microphone arrays. Also, the direct-to-reverb energy radio of the dataset is 2.5 dB with 3.8 dB of standard deviation.

\section{Baseline Methods}\label{app:baseline}
As in Row \textbf{1a} of Table~\ref{tab:smswsj}, Spatial clustering~\cite{wdo, spatialcluster1, spatialcluster2, spatialcluster3} tries to cluster the multi-channel spatial features in STFT domain. Then each STFT bin is assigned to a source based on the clustering, which relies on the assumption that different speech do not overlap in STFT domain, known as W-Disjoint Orthogonality (W-DO)~\cite{wdo}. We use the same spatial clustering configuration as in the baselines of UNSSOR~\cite{unssor}, which uses the open source implementation~\cite{spatialclustergithub}. The FFT size is 1024 (256 ms) and the hop size is 128 (32 ms). The number of sources is set to $K+1$ and the lowest energy source is discarded as a garbage source.

For the Independent Vector Analysis baseline, similar to the baseline in UNSSOR~\cite{unssor}, we use the \textit{torchiva} toolkit with FFT size 2048 (64 ms) and hop size 256 (32 ms). In UNSSOR, the spherical Laplace prior is used as in \textbf{Row 1b} of Table~\ref{tab:smswsj}. We also include Gaussian prior because we find it to perform better, as shown in \textbf{Row 1c} of Table~\ref{tab:smswsj}.

For the UNSSOR baseline trained on Spatialized WSJ0-2Mix datasets, we use the default configuration where both MC and ISMS loss are used, and $I=19$ and $J=0$, following the notation in the paper. We use the open source code\footnote{\hyperlink{https://github.com/merlresearch/reverberation-as-supervision}{https://github.com/merlresearch/reverberation-as-supervision}} to train UNSSOR. We found that UNSSOR training is quite sensitive to the ISMS loss weight, which is $\gamma$ in the original paper. Thus we sweep $\gamma\in\{0.03, 0.04, 0.05, 0.06, 0.08, 0.3\}$ and report the best result for the 2/3/4-channel of Spatialized WSJ0-2Mix dataset. For the SMS-WSJ dataset, we just borrow the result in the original UNSSOR paper and report as UNSSOR$^\dagger$ as in Table~\ref{tab:smswsj}.

\textbf{Row 5a-5c} of Table~\ref{tab:smswsj} show the supervised TF-GridNet~\cite{tfgridnet} with permutation invariant training (PIT)~\cite{pit}, where TF-GridNet-SMS$^\dagger$ is the exact model trained on SMS-WSJ dataset provided as the supervised baseline of UNSSOR, showing superior performance. However, we also need a supervised baseline model trained on the Spatialized WSJ0-2Mix dataset. Thus we reproduce a supervised TF-GridNet training pipeline using the same architecture with TF-GridNet-SMS$^\dagger$ as in UNSSOR, but trained with with SI-SDR\cite{sisdr} PIT loss for simplicity (we are not clear about the training details of the TF-GridNet supervised baseline reported in UNSSOR). We use ADAM optimizer and start with 1e-4 learning rate. We half the learning rate whenever the validation loss do not improve in 3 consecutive epochs. We use batch size 4 and each training segment is 4 seconds long. As in \textbf{Row 5b} of Table~\ref{tab:smswsj}, TF-GridNet-SMS is our implemented supervised TF-GridNet trained on 3-channel SMS-WSJ with 120 epochs (about 1,000,000 steps). In Table~\ref{tab:spatialwsj}, TF-GridNet-Spatial is our implementation of the supervised TF-GridNet trained on Spatialized WSJ0-2Mix dataset mentioned in~\ref{app:spatialwsj}, which tries to generalize to any 3-channel microphone arrays. The model is trained for 200 epochs (about 1,000,000 steps) on Spatialized WSJ0-2Mix dataset. These models are all trained on a single A100 GPU and converges in about 5-6 days.

\section{FCP Ablations}\label{app:fcp}
The FCP mentioned in Sec.~\ref{sec:fcp} takes a $c_1$ channel mixture $X_{c_1}$ and a $c_2$ channel source estimate $\hat{S}_{k, c_2}$ as inputs, and then output the relative RIR $\hat{G}^k_{c_2\rightarrow c_1}$ that can transform $\hat{S}_{k, c_2}$ to $\hat{S}_{k, c_1}$:
\begin{align}
    \hat{G}^k_{c_2\rightarrow c_1} = \text{FCP}(X_{c_1}, \hat{S}_{k, c_2}),~~~~\hat{S}_{k, c_1} = \hat{G}^k_{c_2\rightarrow c_1} *_l \hat{S}_{k, c_2}\label{eq:fcp222}
\end{align}
As discussed in Sec.~\ref{sec:problem}, we decided to use the virtual sources as the estimated sources, which is more flexible comparing with reference channel source images. Here we list a few obvious choices for the virtual source and reason why we use the virtual source based on the oracle FCP performance.

First, the virtual source can be the anechoic source signal $\Tilde{S}_k$. Second, the virtual source can be a partially-reverberant source signal ${S}^{\text{early}}_{k, c_2}$ in channel $c_2$, where only the first 50 milliseconds of the actual room impulse response is applied to filter the anechoic source $\Tilde{S}_k$. Note that this early reverberant signal is motivated by the REVERB challenge~\cite{reverb}, where people found adding the first 50 ms of early reverberation to the anechoic signal sounds better. Third and last, since the virtual source is flexible, it can also be the channel $c_2$'s reverberant image ${S}_{k,c_2}$.

Now we do a few FCP ablations for different kinds of virtual sources. Without loss of generality, we let $c_1=1$. Remember the FCP function always takes two signal inputs: the FCP target signal, and the source estimate signal. In Eq.~\ref{eq:fcp222}, the FCP target is $X_{c_1}$, and the source estimate signal is the channel $c_2$ estimate $\hat{S}_{k, c_2}$. Here, we do the ablations for different kinds of FCP input combinations, where the source estimate inputs can be the three kinds of virtual sources mentioned above (anechoic, early, reverberant), and the FCP target can be either channel $c_1=1$ mixture, or the channel $c_1=1$ reverberant source image. 
Now given the ground-truth virtual sources and the FCP targets, FCP is used to estimate the relative RIRs. Then after applying the estimated relative RIR on the virtual source estimate, we get the channel 1 source image estimate $\hat{S}_{k,1}$ which is used to calculate the speech metrics in Table~\ref{tab:fcp_ablation}. We calculate all the results in SMS-WSJ test set.
\begin{table}[ht]
\centering
\caption{FCP ablations for different kinds of FCP inputs combinations.The metrics reported are on channel 1 source images.}
\renewcommand{\arraystretch}{1.3}  % Adjusts the height of the table rows
\label{tab:fcp_ablation}
\begin{tabular}{cccccccc}
\midrule
Row & Source Estimate & FCP Target & FCP & SDR (dB) & SI-SDR (dB) & PESQ & eSTOI \\\midrule
\rowcolor{gray!10}
1 & anechoic $\Tilde{S}_k$ & mixture (ch-1) $X_1$ & $\text{FCP}(X_1, \Tilde{S}_k)$ & 22.0 & 19.8 & 4.15 & 0.974 \\\midrule
2 & early (ch-2) $S^{\text{early}}_{k, 2}$ & mixture (ch-1) $X_1$ & $\text{FCP}(X_1, S^{\text{early}}_{k, 2})$ & 16.5 & 13.2 & 3.91 & 0.943 \\\midrule
\rowcolor{gray!10}
3 & reverb (ch-2) ${S}_{k,2}$ & mixture (ch-1) $X_1$ & $\text{FCP}(X_1, {S}_{k,2})$ & 14.7 & 11.3 & 3.59 & 0.917 \\\midrule
4 & anechoic $\Tilde{S}_k$ & reverb (ch-1) $S_{k,1}$ & $\text{FCP}(S_{k,1}, \Tilde{S}_k)$ & 34.9 & 33.3 & 4.45 & 0.997 \\\midrule
\rowcolor{gray!10}
5 & early (ch-2) $S^{\text{early}}_{k, 2}$& reverb (ch-1) $S_{k,1}$ & $\text{FCP}(S_{k,1}, S^{\text{early}}_{k, 2})$ & 18.2 & 14.5 & 4.15 & 0.967 \\\midrule
6 & reverb (ch-2) ${S}_{k,2}$ & reverb (ch-1) $S_{k,1}$ & $\text{FCP}(S_{k,1}, {S}_{k,2})$ & 16.0 & 12.1 & 3.84 & 0.941 \\\midrule
\end{tabular}
% \label{tab:audio_results}
\end{table}

First, observe the first three rows in Table~\ref{tab:fcp_ablation}, where the FCP target is set to be the channel 1 mixture. In rows 1, 2, and 3, the source estimate is set to be the anechoic source, the early reverberant source, and the reverberant source, respectively. We can see that when the source estimate is set to be the anechoic source, the FCP filtering performance is much better than when the source estimate is early reverberant or reverberant. To see why this is the case, remember that in our signal model in Eq.~\ref{eq:relative_rir} in Sec.~\ref{sec:problem}, the relative RIR from one channel source image to another channel source image involves a inverse filtering operation, which may not exist. This also explains why in row 2a-2b in Table~\ref{tab:smswsj}, using an anechoic speech prior performs better than using the reverberant speech prior.

Row 4,5,6 in Table~\ref{tab:fcp_ablation} shows the case when the FCP target is the channel 1 source image. Note that in row 4, when the source estimate is set to groud-truth anechoic source, the metrics is almost perfect, showing the correctness of the signal model used in Sec.~\ref{sec:problem}.
% Row 4,5,6 in Table~\ref{tab:fcp_ablation} shows the case when the FCP target is the channel 1 source image. Note that in row 4, when the source estimate is set to ground-truth anechoic source, the metrics is almost perfect, showing the correctness of the signal model used in Sec.~\ref{sec:problem}.
\section{2 and 3 Channels Spatialized WSJ0-2Mix Evaluation Results}\label{app:result_spatial}
\begin{table*}[h]
\centering
\caption{Evaluation results for 2-channel Spatialized WSJ0-2Mix. Note that the microphone positions are random for this dataset. Top results are emphasized in \colorbox{top1}{top1}, \colorbox{top2}{top2}, and \colorbox{top3}{top3}. Methods denoted with $*$ means it is impractical.}
\setlength\tabcolsep{3.3pt}
\renewcommand{\arraystretch}{1}  % Reduced line spacing
\label{tab:spatialwsj2}
\begin{tabular}{@{}cccccccccccc@{}} % Added a column for the row numbers
\toprule
 \multirow{2}{*}{Row} & \multirow{2}{*}{Methods} & \multirow{2}{*}{Unsup.} & \multirow{2}{*}{\shortstack{Array \\ Agnostic}} & \multirow{2}{*}{Prior} & \multirow{2}{*}{\shortstack{IVA \\ Init.}} & \multirow{2}{*}{\shortstack{Ref. \\ Guide.}} & \multirow{2}{*}{\shortstack{SDR \\ (dB)}} & \multirow{2}{*}{\shortstack{SI-SDR \\ (dB)}} & \multirow{2}{*}{PESQ} & \multirow{2}{*}{eSTOI} & \\
 & & & & & & & & & & & \\
\midrule
\rowcolor{gray!10}
0 & Mixture  & - & - & - & - & - & 0.2 & 0.0 & 1.81 & 0.545 & \\
\midrule
% 1a & Spatial Cluster & \checkmark & \checkmark & - & - & - & 9.3 & 8.0 & 2.48 & 0.745 & \\
1a & Spatial Cluster & \checkmark & \checkmark & - & - & - & 7.9 & 6.5 & 2.38 & 0.689 & \\
\rowcolor{gray!10}
1b & IVA & \checkmark & \checkmark & Laplace & - & - & 9.3 & 8.1 & 2.52 & 0.725 & \\
1c & IVA & \checkmark & \checkmark & Gaussian & - & - & 10.9 & 9.8 & 2.68 & 0.770 & \\
\rowcolor{gray!10}
1d & UNSSOR & \checkmark & \texttimes & - & - & - & 0.3 & -2.7 & 1.78 & 0.478 & \\
\midrule
2a & \method-A & \checkmark & \checkmark & Anechoic & \checkmark & \checkmark & 14.5$\pm$0.7 & 13.7$\pm$0.7 & 3.32$\pm0.08$ & 0.853$\pm$0.014& \\
\rowcolor{gray!10}
2b & \method-D & \checkmark & \checkmark & Anechoic & \checkmark & \checkmark & 5.6$\pm$4.0 & 3.6$\pm$4.6 & 2.34$\pm$0.45 & 0.652$\pm$0.114 & \\
\midrule
3a & \method-A-Max*& \checkmark & \checkmark & Anechoic & \checkmark & \checkmark & \cellcolor{top1}{15.3} & \cellcolor{top1}{14.6} & \cellcolor{top2}{3.41} & \cellcolor{top2}{0.870} & \\
\rowcolor{gray!10}
3b & \method-D-Max*& \checkmark & \checkmark & Anechoic & \texttimes & \checkmark & 11.2 & 9.9 & 2.96 & 0.799 & \\
\midrule
4a & \method-A-ML & \checkmark & \checkmark & Anechoic & \checkmark & \checkmark & \cellcolor{top3}{15.1}& \cellcolor{top3}{14.3}&\cellcolor{top3}{3.39} &\cellcolor{top3}{0.865}& \\
\rowcolor{gray!10}
4b & \method-D-ML & \checkmark & \checkmark & Anechoic & \texttimes & \checkmark & 10.4 & 8.9 & 2.87 & 0.773 & \\
\midrule
5a & TF-GridNet-Spatial & \texttimes & (2-mics) & - & - & - & \cellcolor{top2}{15.2} & \cellcolor{top2}{14.5} & \cellcolor{top1}{3.63} & \cellcolor{top1}{0.888} & \\
\bottomrule
\end{tabular}
\end{table*}

\begin{table*}[h]
\centering
\caption{Evaluation results for 3-channel Spatialized WSJ0-2Mix. Note that the microphone positions are random for this dataset. Top results are emphasized in \colorbox{top1}{top1}, \colorbox{top2}{top2}, and \colorbox{top3}{top3}. Methods denoted with $*$ means it is impractical.}
\setlength\tabcolsep{3.3pt}
\renewcommand{\arraystretch}{1}  % Reduced line spacing
\label{tab:spatialwsj4}
\begin{tabular}{@{}cccccccccccc@{}} % Added a column for the row numbers
\toprule
 \multirow{2}{*}{Row} & \multirow{2}{*}{Methods} & \multirow{2}{*}{Unsup.} & \multirow{2}{*}{\shortstack{Array \\ Agnostic}} & \multirow{2}{*}{Prior} & \multirow{2}{*}{\shortstack{IVA \\ Init.}} & \multirow{2}{*}{\shortstack{Ref. \\ Guide.}} & \multirow{2}{*}{\shortstack{SDR \\ (dB)}} & \multirow{2}{*}{\shortstack{SI-SDR \\ (dB)}} & \multirow{2}{*}{PESQ} & \multirow{2}{*}{eSTOI} & \\
 & & & & & & & & & & & \\
\midrule
\rowcolor{gray!10}
0 & Mixture  & - & - & - & - & - & 0.2 & 0.0 & 1.81 & 0.545 & \\
\midrule
1a & Spatial Cluster & \checkmark & \checkmark & - & - & - & 9.1 & 7.8 & 2.53 & 0.735 & \\
\rowcolor{gray!10}
1b & IVA & \checkmark & \checkmark & Laplace & - & - & 10.4 & 8.5 & 2.66 & 0.750 & \\
1c & IVA & \checkmark & \checkmark & Gaussian & - & - & 13.9 & 12.1 & 3.07 & 0.842 & \\
\rowcolor{gray!10}
1d & UNSSOR & \checkmark & \texttimes & - & - & - & 1.7 & -2.4 & 1.94 & 0.519 & \\
\midrule
2a & \method-A & \checkmark & \checkmark & Anechoic & \checkmark & \checkmark & \cellcolor{top3}{}15.7$\pm$0.6 & \cellcolor{top3}{15.0}$\pm$0.7 & 3.44$\pm$0.07 & 0.872$\pm$0.012 & \\
\rowcolor{gray!10}
2b & \method-D & \checkmark & \checkmark & Anechoic & \checkmark & \checkmark & 6.3$\pm$4.3 & 4.3$\pm$4.9 & 2.41$\pm$0.49 & 0.668$\pm$0.119 & \\
\midrule
3a & \method-A-Max*& \checkmark & \checkmark & Anechoic & \checkmark & \checkmark & \cellcolor{top1}{16.4} & \cellcolor{top1}{15.7} & \cellcolor{top2}3.52 & \cellcolor{top2}0.886 & \\
\rowcolor{gray!10}
3b & \method-D-Max*& \checkmark & \checkmark & Anechoic & \texttimes & \checkmark & 12.3 & 11.0 & 3.07 & 0.808 & \\
\midrule
4a & \method-A-ML & \checkmark & \checkmark & Anechoic & \checkmark & \checkmark & \cellcolor{top2}16.3 & \cellcolor{top2}15.5 & \cellcolor{top3}3.49 & \cellcolor{top3}0.881 & \\
\rowcolor{gray!10}
4b & \method-D-ML & \checkmark & \checkmark & Anechoic & \texttimes & \checkmark & 11.5 & 10.0 & 2.99 & 0.796 & \\
\midrule
5a & TF-GridNet-Spatial & \texttimes & (3-mics) & - & - & - & 15.5 & 14.8 & \cellcolor{top1}3.65 & \cellcolor{top1}0.892 & \\
\bottomrule
\end{tabular}
\end{table*}
In addition to the 4-channel Spatialized WSJ0-2Mix dataset's result in Table~\ref{tab:spatialwsj}, we also report 2-channel (first 2 mics) and 3-channel (first 3 mics) results on Spatialized WSJ0-2Mix dataset, in Table~\ref{tab:spatialwsj2} and Table~\ref{tab:spatialwsj4} respectively.

For the 2-channel case, we first note that in row \textbf{1d} of Table~\ref{tab:spatialwsj2}, the UNSSOR does not work properly. This is actually as expected because UNSSOR assumes an over-determined scenario~\cite {unssor}, which means the number of microphones should be larger than the number of speakers. In contrast, our \method~is able to work properly. In addition, our method easily outperforms any other unsupervised methods. Finally, we compare row \textbf{5a} with row \textbf{2a} and row \textbf{4a}, we find that the supervised method is slightly better than \method-A-ML in all metrics for the 2-channel case.

For the 3-channel results shown in Table~\ref{tab:spatialwsj4}, the result is very similar to the 4-channel results shown in Table~\ref{tab:spatialwsj}. In short, \method-A outperforms all other unsupervised methods, while UNSSOR does not work properly in this 3-channel setting. As mentioned in Sec.~\ref{app:baseline}, UNSSOR is very sensitive to ISMS loss weight $\gamma$, but we still cannot find a working weight after an extensive parameter search. Note that we ensure that our training is correct as it works for both 4-channel Spatialized WSJ0-2Mix and 3-channel SMS-WSJ datasets. One explanation is that the training dataset contains samples recorded by different 3-channel ad-hoc microphone arrays, which makes it more difficult for UNSSOR to converge. We further compare \method~(row \textbf{2a, 4a}) with the supervised baseline (row \textbf{5a}), where \method-A's mean SDR and SI-SDR can outperform the supervised TF-GridNet-Spatial.

\section{6-Channel SMS-WSJ Evaluation Results}\label{app:result_smswsj}
In addition to the 3-channel SMS-WSJ result reported in Table~\ref{tab:smswsj}, we also report a full 6-channel (all mics used) result for SMS-WSJ dataset. We first compare row \textbf{1(a-c)} with row \textbf{2a}, where our ArrayDPS-A easily outperforms spatial clustering and IVA. Comparing row \textbf{1d} with row \textbf{2d}, we can see that ArrayDPS-A's mean metric score outperforms UNSSOR by about 0.7 dB in SDR and SI-SDR, but performs a bit worse in terms of PESQ and eSTOI. However, our maximum likelihood version ArrayDPS-A-ML (row \textbf{4a}) easily outperforms UNSSOR in all metrics by a large margin. Although in 6-channel setting, \method-A-ML is still better than any other unsupervised methods, the supervised TF-GridNet-SMS in row \textbf{5a} leads our \method~by a large margin. Future research will be conducted to reduce this performance gap.
\begin{table*}[h]
\label{tab:smswsj6}
\centering
\caption{Evaluation results for 6-channel SMS-WSJ. Methods denoted with $\dagger$ are results from UNSSOR~\cite{unssor}, and methods denoted with $*$ mean it is impractical. Note that SMS-WSJ only contains samples with a fixed microphone array. Top results are emphasized in \colorbox{top1}{top1}, \colorbox{top2}{top2}, and \colorbox{top3}{top3}.}
\setlength\tabcolsep{3.3pt}
\renewcommand{\arraystretch}{1}  % Reduced line spacing
\begin{tabular}{@{}cccccccccccc@{}} % Added a column for the row numbers
\toprule
 \multirow{2}{*}{Row} & \multirow{2}{*}{Methods} & \multirow{2}{*}{Unsup.} & \multirow{2}{*}{\shortstack{Array \\ Agnostic}} & \multirow{2}{*}{Prior} & \multirow{2}{*}{\shortstack{IVA \\ Init.}} & \multirow{2}{*}{\shortstack{Ref. \\ Guide.}} & \multirow{2}{*}{\shortstack{SDR \\ (dB)}} & \multirow{2}{*}{\shortstack{SI-SDR \\ (dB)}} & \multirow{2}{*}{PESQ} & \multirow{2}{*}{eSTOI} & \\
 & & & & & & & & & & & \\
\midrule
\rowcolor{gray!10}
0 & Mixture  & - & - & - & - & - & 0.1 & 0.0 & 1.87 & 0.603 & \\
\midrule
1a & Spatial Cluster$^\dagger$ & \checkmark & \checkmark & - & - & - & 11.9 & 10.2 & 2.61 & 0.735 & \\
\rowcolor{gray!10}
1b & IVA$^\dagger$ & \checkmark & \checkmark & Laplace & - & - & 10.6 & 8.9 & 2.58 & 0.764 & \\
1c & IVA & \checkmark & \checkmark & Gaussian & - & - & 14.7 & 13.4 & 3.07 & 0.865 & \\
\rowcolor{gray!10}
1d & UNSSOR$^\dagger$ & \checkmark & \texttimes & - & - & - & 15.7 & 14.7 & 3.47 & 0.884 & \\
\midrule
2a & \method-A & \checkmark & \checkmark & Anechoic & \checkmark & \checkmark & 16.3$\pm$1.2 & 15.4$\pm$1.3 & 3.45$\pm$0.12 & 0.873$\pm$0.019 & \\
\rowcolor{gray!10}
2b & \method-D & \checkmark & \checkmark & Anechoic & \checkmark & \checkmark & 8.9$\pm$4.9 & 7.2$\pm$5.4 & 2.67$\pm$0.54 & 0.743$\pm$0.115 & \\
\midrule
3a & \method-A-Max*& \checkmark & \checkmark & Anechoic & \checkmark & \checkmark & \cellcolor{top2}{17.6} & \cellcolor{top2}{16.9} & \cellcolor{top2}{3.59} & \cellcolor{top2}{0.896} & \\
\rowcolor{gray!10}
3b & \method-D-Max*& \checkmark & \checkmark & Anechoic & \texttimes & \checkmark & 15.1 & 14.0 & 3.32 & 0.873 & \\
\midrule
4a & \method-A-ML & \checkmark & \checkmark & Anechoic & \checkmark & \checkmark & \cellcolor{top3}{17.4} & \cellcolor{top3}{16.6} & \cellcolor{top3}{3.56} & \cellcolor{top3}{0.891} & \\
\rowcolor{gray!10}
4b & \method-D-ML & \checkmark & \checkmark & Anechoic & \texttimes & \checkmark & 14.6 & 13.5 & 3.27 & 0.862 & \\
\midrule
5a & TF-GridNet-SMS$^\dagger$ & \texttimes & \texttimes & - & - & - & \cellcolor{top1}{19.4} & \cellcolor{top1}{18.9} & \cellcolor{top1}{4.08} & \cellcolor{top1}{0.949} & \\
% 5b & TF-GridNet-SMS & \texttimes & \texttimes & - & - & - & - & - & - & - & \\
\bottomrule
\end{tabular}
\end{table*}

\section{3-Speaker Evaluation Results}\label{app:3speakers}
For 3-speaker source separation, we evaluate on the spatialized WSJ0-3Mix dataset, which is the 3-speaker version of the ad-hoc spatialized WSJ0-2Mix dataset. We evaluate on the first 4 microphones, where the results are shown in Table~\ref{tab:3spk}. As shown in Row \textbf{2a-5a}, ArrayDPS outperforms the supervised baseline by a large margin in all metrics. If we sample five samples and then take the maximum likelihood one (Row \textbf{4a}), ArrayDPS outperforms supervised TF-GridNet-Spatial by more than 3 dB in terms of SDR and SI-SDR. This shows the potential of ArrayDPS in more diverse acoustic scenarios.
\begin{table*}[t]
\centering
\caption{Evaluation results for 3-speaker reverberant speech separation on 4-channel Spatialized WSJ0-3Mix. Note that the microphone positions are random for this dataset. Top results are emphasized in \colorbox{top1}{top1}, \colorbox{top2}{top2}, and \colorbox{top3}{top3}. Methods denoted with $*$ means it is impractical.}
\setlength\tabcolsep{3.3pt}
\renewcommand{\arraystretch}{1}  % Reduced line spacing
\label{tab:3spk}
\begin{tabular}{@{}cccccccccccc@{}} % Added a column for the row numbers
\toprule
 \multirow{2}{*}{Row} & \multirow{2}{*}{Methods} & \multirow{2}{*}{Unsup.} & \multirow{2}{*}{\shortstack{Array \\ Agnostic}} & \multirow{2}{*}{Prior} & \multirow{2}{*}{\shortstack{IVA \\ Init.}} & \multirow{2}{*}{\shortstack{Ref. \\ Guide.}} & \multirow{2}{*}{\shortstack{SDR \\ (dB)}} & \multirow{2}{*}{\shortstack{SI-SDR \\ (dB)}} & \multirow{2}{*}{PESQ} & \multirow{2}{*}{eSTOI} & \\
 & & & & & & & & & & & \\
\midrule
\rowcolor{gray!10}
0 & Mixture  & - & - & - & - & - & -3.0 & -3.3 & 1.50 & 0.371 & \\
\midrule
1a & Spatial Cluster & \checkmark & \checkmark & - & - & - & 6.9 & 5.6 & 2.10 & 0.625 & \\
\rowcolor{gray!10}
1b & IVA & \checkmark & \checkmark & Laplace & - & - & 5.9 & 3.8 & 2.10 & 0.574 & \\
1c & IVA & \checkmark & \checkmark & Gaussian & - & - & 9.6 & 7.6 & 2.46 & 0.701 & \\
\rowcolor{gray!10}
1d & UNSSOR & \checkmark & \texttimes & - & - & - & -2.8 & -6.1 & 1.50 & 0.313 & \\
\midrule
2a & \method-A & \checkmark & \checkmark & Anechoic & \checkmark & \checkmark & \cellcolor{top2}{12.8}$\pm$1.2 & \cellcolor{top2}{11.8}$\pm$1.3 & \cellcolor{top2}{3.11$\pm$0.16} & \cellcolor{top2}{0.791$\pm$0.028} & \\
\rowcolor{gray!10}
3a & \method-A-Max$^*$ & \checkmark & \checkmark & Anechoic & \checkmark & \checkmark & {14.3} & {13.4} & 3.31 & 0.816 & \\
% \rowcolor{gray!10}
4a & \method-A-ML & \checkmark & \checkmark & Anechoic & \checkmark & \checkmark & \cellcolor{top1}14.0 & \cellcolor{top1}13.0 & \cellcolor{top1}3.27 & \cellcolor{top1}0.807 & \\
\rowcolor{gray!10}
5a & TF-GridNet-Spatial & \texttimes & (4-mics) & - & - & - & \cellcolor{top3}10.6 & \cellcolor{top3}9.7 & \cellcolor{top3}2.85 & \cellcolor{top3}0.747 & \\
\bottomrule
\end{tabular}
\end{table*}

\section{Ablations on Sensitivity to Hyperparameters}\label{app:ablations}
To evaluate our ArrayDPS's sensitivity and robustness to hyper-parameters, we conduct ablation experiments on the likelihood score guidance $\xi$, the sampling starting noise level $\tau_{\text{max}}$, FCP filter length $F$, the number of steps $N_{fg}$ to use the IVA initialized filters, whether to use Gaussian or Laplace prior for IVA initialization, and whether to use the IVA separated source as diffusion initialization (line 8 of Algorithm~\ref{alg:ubliss}). For each parameter, we do ablations by modifying it starting from the default parameter mentioned in Appendix~\ref{app:baseline}. All the ablations are conducted on the first 50 samples of the SMS-WSJ validation set. Results are shown in Table~\ref{tab:ablation-xi}-\ref{tab:ablation-warm}
.
% ----------- ξ Ablation -----------
\begin{table}[h]
    \centering
    \caption{Ablation study for $\xi$.}
    \label{tab:ablation-xi}
    \begin{tabular}{lccccccccc}
        \toprule
        $\xi$ & 1.2 & 1.4 & 1.6 & 1.8 & 2.0 & 2.2 & 2.4 & 2.6 & 2.8 \\
        \midrule
        SI-SDR & 15.2$\pm$1.1 & 15.5$\pm$1.1 & 15.6$\pm$1.0 & 15.6$\pm$1.2 & 
        15.3$\pm$1.6 & 15.0$\pm$1.5 & 15.0$\pm$1.6 & 15.2$\pm$2.6 & 15.1$\pm$1.8 \\
        \bottomrule
    \end{tabular}
\end{table}

% ----------- τ_max Ablation -----------
\begin{table}[H]
    \centering
    \caption{Ablation study for $\tau_{\text{max}}$.}
    \label{tab:ablation-taumax}
    \begin{tabular}{lccccccc}
        \toprule
        $\tau_{\text{max}}$ & 0.5 & 0.6 & 0.7 & 0.8 & 0.9 & 1.0 & 1.1 \\
        \midrule
        SI-SDR & 16.1$\pm$0.9 & 16.0$\pm$0.8 & 15.8$\pm$1.2 & 15.6$\pm$1.1 & 
        15.1$\pm$1.7 & 14.8$\pm$1.8 & 14.3$\pm$2.1 \\
        \bottomrule
    \end{tabular}
\end{table}

% ----------- F Ablation -----------
\begin{table}[H]
    \centering
    \caption{Ablation study for FCP filter length $F$.}
    \label{tab:ablation-F}
    \begin{tabular}{lccccccc}
        \toprule
        $F$ & 10 & 11 & 12 & 13 & 14 & 15 & 16 \\
        \midrule
        SI-SDR & 15.8$\pm$1.3 & 15.5$\pm$1.6 & 15.6$\pm$1.2 & 15.1$\pm$1.7 & 
        14.9$\pm$1.7 & 14.9$\pm$1.5 & 14.8$\pm$1.6 \\
        \bottomrule
    \end{tabular}
\end{table}

% ----------- N_fg Ablation -----------
\begin{table}[H]
    \centering
    \caption{Ablation study for filter guidance steps $N_{fg}$.}
    \label{tab:ablation-Nfg}
    \begin{tabular}{lccccccc}
        \toprule
        $N_{fg}$ & 70 & 80 & 90 & 100 & 110 & 120 & 130 \\
        \midrule
        SI-SDR & 15.0$\pm$1.7 & 15.2$\pm$1.7 & 14.6$\pm$2.1 & 15.2$\pm$1.7 & 
        15.4$\pm$1.2 & 15.5$\pm$1.1 & 15.5$\pm$1.1 \\
        \bottomrule
    \end{tabular}
\end{table}

% ----------- IVA Prior Type Ablation -----------
\begin{table}[H]
    \centering
    \caption{Ablation study for IVA prior used for IVA initialization.}
    \label{tab:ablation-iva}
    \begin{tabular}{lcc}
        \toprule
        IVA Prior & Gaussian & Laplace \\
        \midrule
        SI-SDR & 15.3$\pm$1.6 & 14.8$\pm$1.6 \\
        \bottomrule
    \end{tabular}
\end{table}

% ----------- Warm Initialization Ablation -----------
\begin{table}[H]
    \centering
    \caption{Ablation study for using IVA separated sources as diffusion initialization (line 8 of Algorithm~\ref{alg:ubliss}).}
    \label{tab:ablation-warm}
    \begin{tabular}{lcc}
        \toprule
        Diffusion Initialization& Yes & No \\
        \midrule
        SI-SDR & 15.3$\pm$1.6 & 13.3$\pm$2.3 \\
        \bottomrule
    \end{tabular}
\end{table}

\section{Filter and Source Visualization}\label{app:visualization}
Figure~\ref{fig:sample1}-~\ref{fig:sample4} gives visualization of the separated sources and the final FCP estimated relative RIRs. 
In each figure of Figure~\ref{fig:sample1}-~\ref{fig:sample4}, the first row contains the ground-truth anechoic source 1 $\Tilde{s}_1$, the ground-truth RIR $h_{1,1}$ (from anechoic source to reference channel), and the reference-channel reverberant source 1 $s_{1,1}$. Obviously, the third signal is the convolution of the first two signals. Then the second row shows the same for our ArrayDPS, i.e., separated virtual source 1 $\hat{s}_0$, the final FCP estimated relative RIR $\hat{g}_{0\rightarrow 1}^1$, and the separated reference-channel source 1 $\hat{s}_{1,1}$. These two adjacent rows allow direct comparison between the ground-truth and ArrayDPS. Similarly, row 3 and row 4 show results for source 2.

These figures show us the difference between the final FCP estimated filter and the ground truth RIR. Those two signals might not be aligned, but sometimes show similar structure on the spectrogram. On the other hand, the virtual source separated by ArrayDPS is more like the anechoic signal rather than the reverberant signal. We explain this for two reasons: 1) the diffusion model is trained mostly on clean anechoic speech, hence the inclination to generate anechoic sources; 2) FCP performs much better when the input source signal is anechoic (see Appendix~\ref{app:fcp}). This means outputting an anechoic source (instead of a reverberant one) would satisfy a higher likelihood.

Our demo site also shows virtual sources with 2-speaker separation. We urge readers to listen to the virtual source samples and compare them with the final output and the ground-truth anechoic source. It appears that ArrayDPS is indeed accomplishing some dereverberation, however, we do not want to over-claim here and intend to verify this promise in future research.

\begin{figure}[H]
    \centering
    \includegraphics[width=0.7\linewidth]{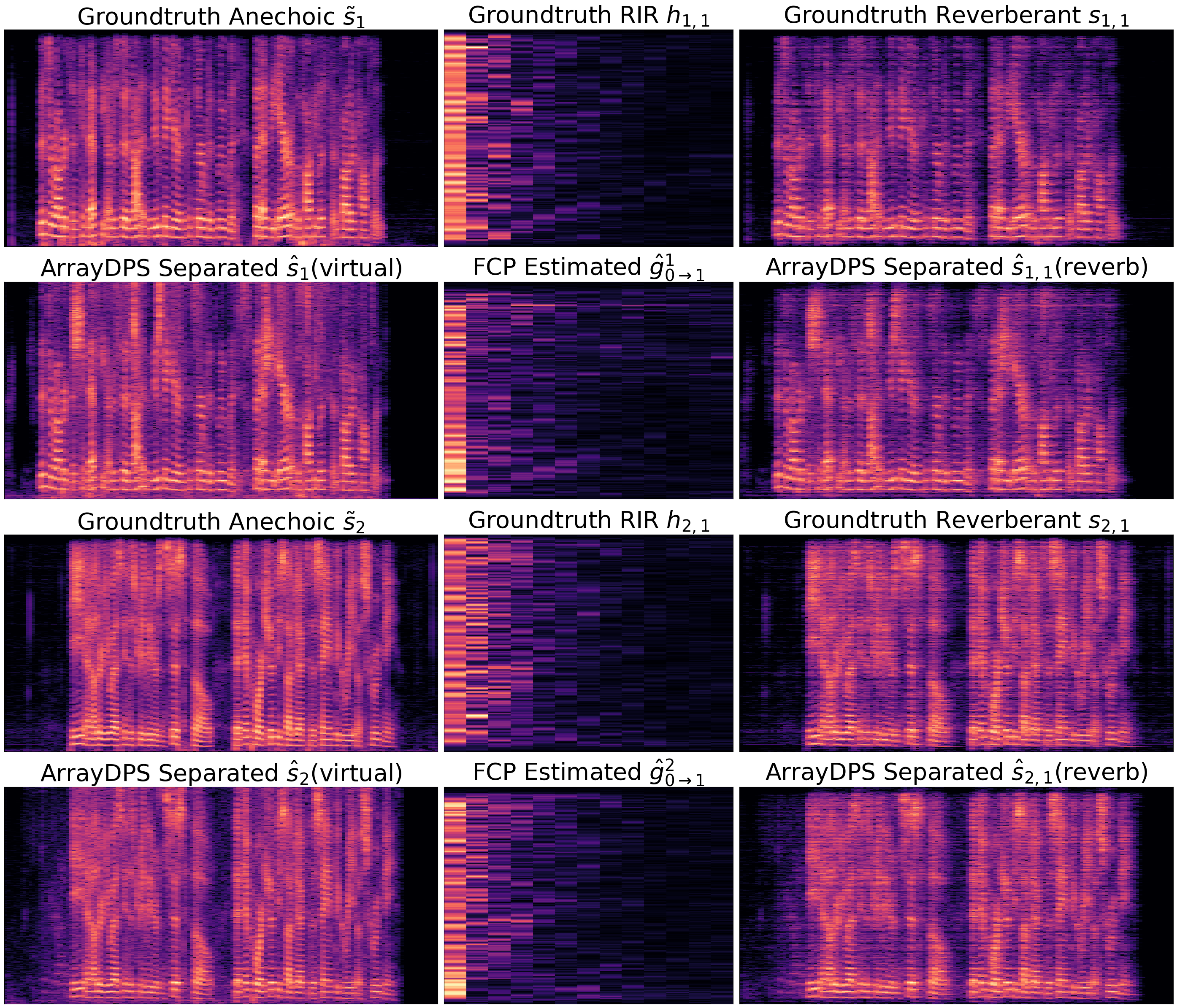}
    \vspace{-10pt}
    \caption{SMS-WSJ Sample 0\_442c040o\_443c040g Visualization}
    \label{fig:sample1}
\end{figure}
\begin{figure}[H]
    \centering
    \includegraphics[width=0.7\linewidth]{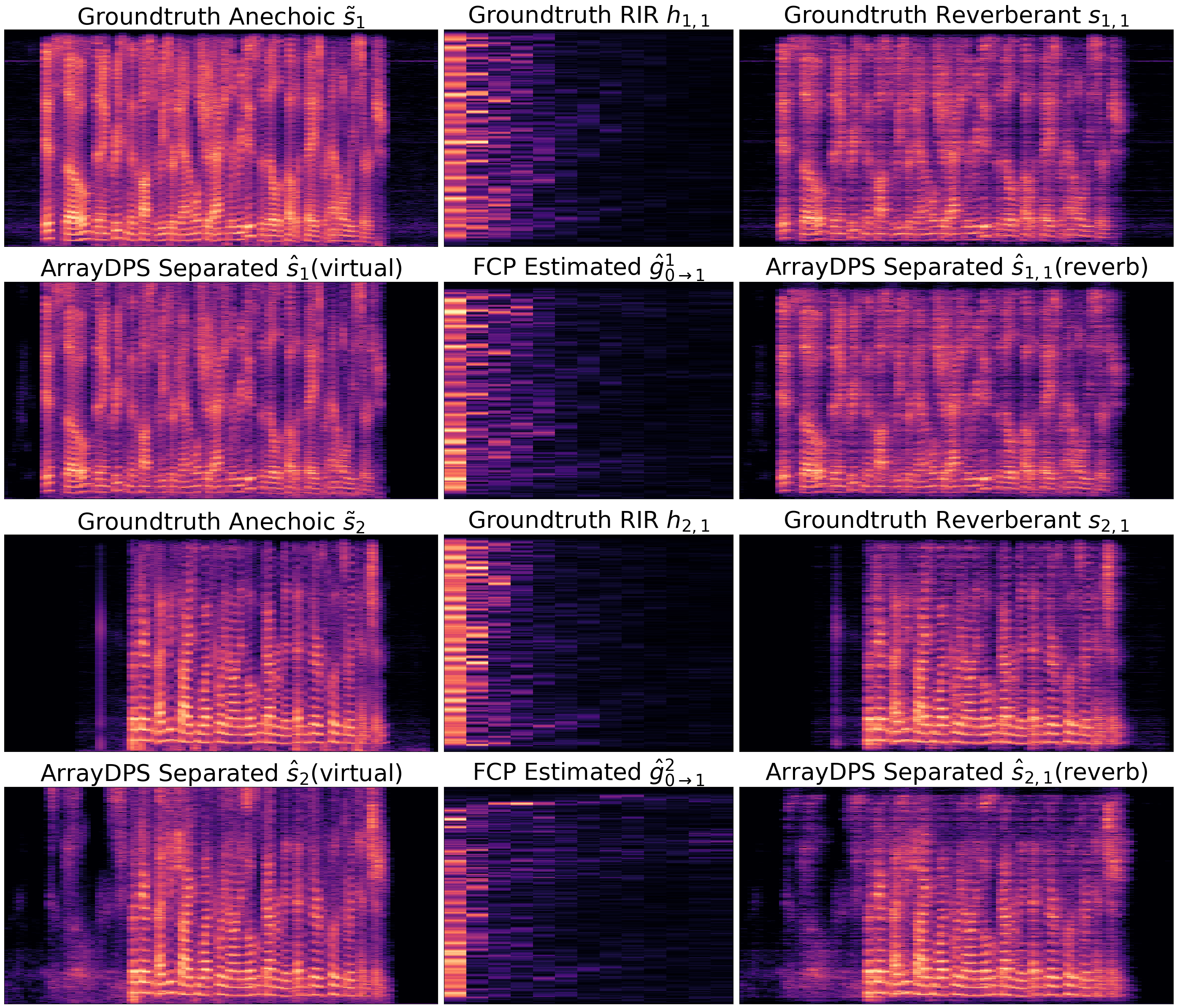}
    \vspace{-10pt}
    \caption{SMS-WSJ Sample 1015\_446c0415\_442c040c Visualization}
    \label{fig:sample2}
\end{figure}
\begin{figure}[H]
    \centering
    \includegraphics[width=0.7\linewidth]{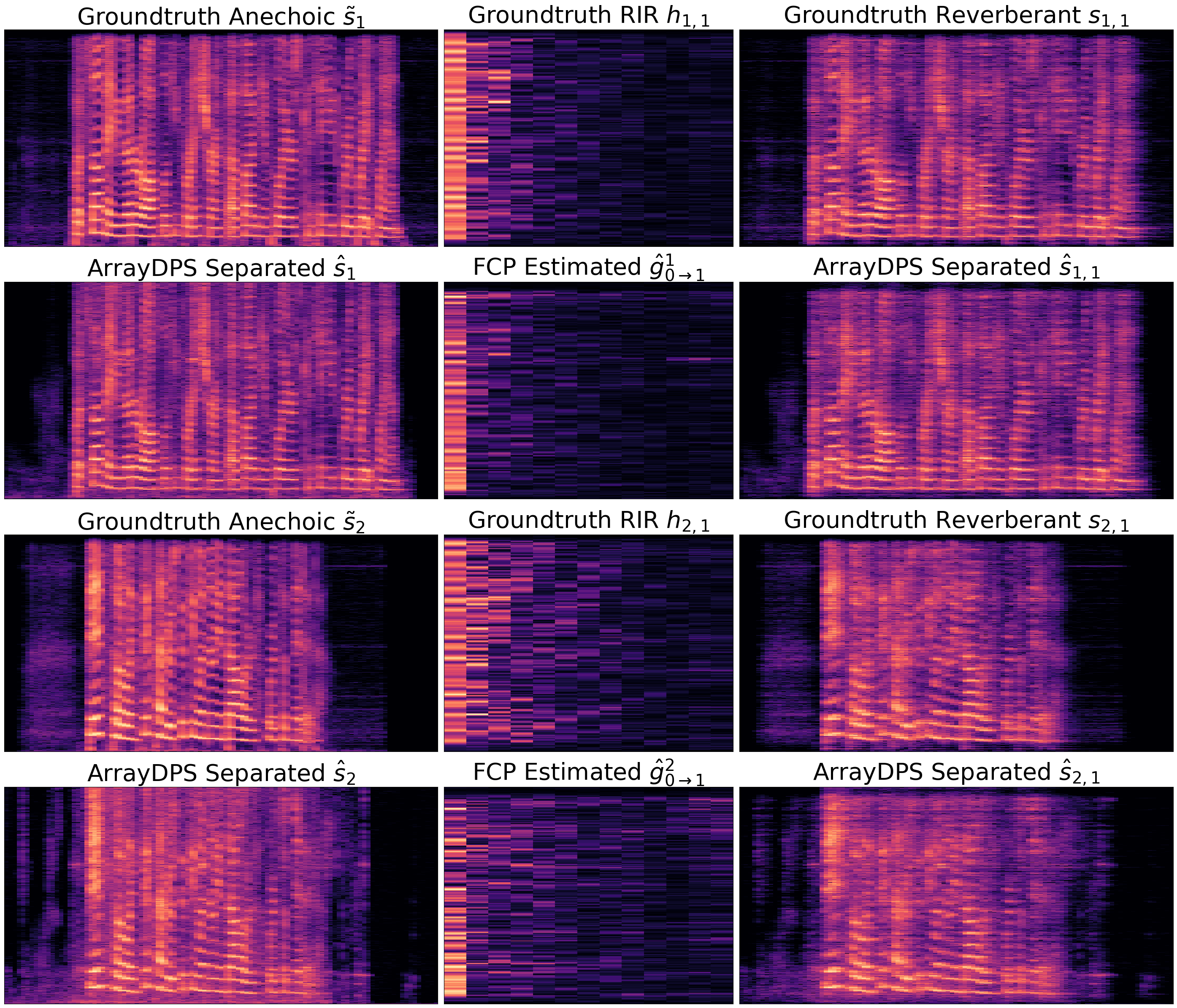}
    \vspace{-10pt}
    \caption{SMS-WSJ Sample 1120\_445c040c\_441c040m Visualization}
    \label{fig:sample3}
\end{figure}
\begin{figure}[H]
    \centering
    \includegraphics[width=0.7\linewidth]{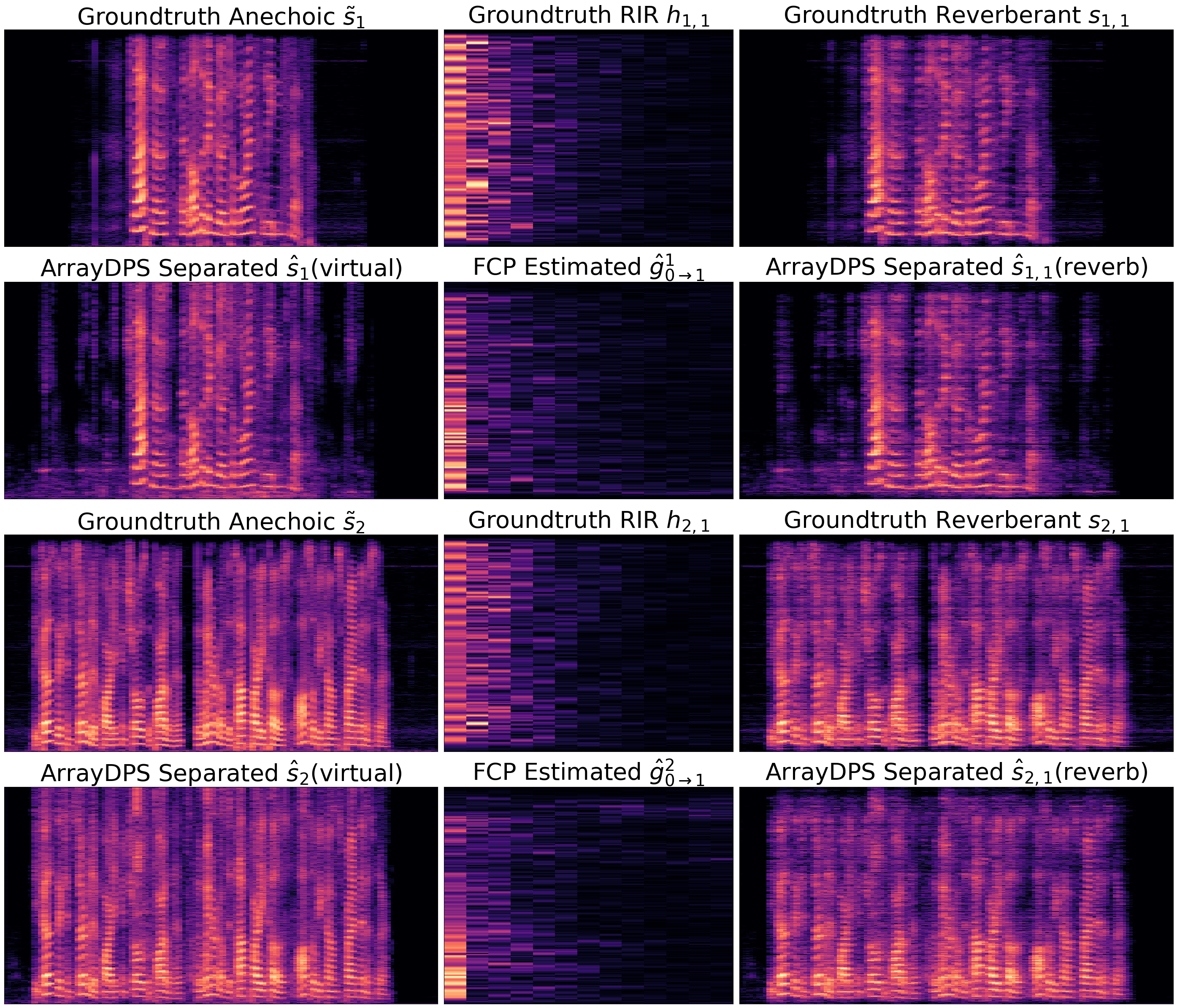}
    \vspace{-10pt}
    \caption{SMS-WSJ Sample 999\_441c040c\_447c040k Visualization}
    \label{fig:sample4}
\end{figure}
% \section{Dereverberation Effect of \method}\label{app:dereverb}

% \section{Relative RIR visualization}\label{app:visualization}
% You can have as much text here as you want. The main body must be at most $8$ pages long.
% For the final version, one more page can be added.
% If you want, you can use an appendix like this one.  

% The $\mathtt{\backslash onecolumn}$ command above can be kept in place if you prefer a one-column appendix, or can be removed if you prefer a two-column appendix.  Apart from this possible change, the style (font size, spacing, margins, page numbering, etc.) should be kept the same as the main body.
%%%%%%%%%%%%%%%%%%%%%%%%%%%%%%%%%%%%%%%%%%%%%%%%%%%%%%%%%%%%%%%%%%%%%%%%%%%%%%%
%%%%%%%%%%%%%%%%%%%%%%%%%%%%%%%%%%%%%%%%%%%%%%%%%%%%%%%%%%%%%%%%%%%%%%%%%%%%%%%

\end{document}